\newtheorem{theorem}{Lemma}
\newtheorem{theorem1}{Corollary}
\begin{document}
%
% paper title
% can use linebreaks \\ within to get better formatting as desired
\renewcommand{\thefootnote}{\fnsymbol{footnote}}
\title{Joint Convexity of  Error Probability in Blocklength and  Transmit Power in the Finite Blocklength Regime 
\thanks{
This work was supported in part by  the China National Key Research and Development Program under Grant 2021YFB2900301, in part by BMBF Germany in the program of ``Souverän. Digital. Vernetzt.'' Joint Project 6G-RIC with project identification number 16KISK028, and in part by the U.S. National Science Foundation under Grant CNS-2128448.  

Y. Zhu, Y. Hu, X. Yuan and A. Schmeink are with the Chair of Information Theory and Data Analytics, RWTH Aachen University, 52074 Aachen, Germany (e-mail: \{zhu,hu,yuan,schmeink\}@isek.rwth-aachen.de). M. C. Gursoy is with the Department of Electrical Engineering and
Computer Science, Syracuse University, NY 13210, USA (e-mail: mcgursoy@syr.edu). 
H. V. Poor is     with the Department of Electrical Engineering, Princeton University, Princeton,  NJ 08544 USA
(e-mail:poor@princeton.edu). 
*Y. Hu is the corresponding  author. } }
% author names and affiliations
% use a multiple column layout for up to three different
% affiliations
\author{
Yao Zhu, %~\IEEEmembership{Member,~IEEE,}~
Yulin Hu$^*$, %~\IEEEmembership{Senior Member,~IEEE,}~
Xiaopeng Yuan, %~\IEEEmembership{Member,~IEEE,}~
M. Cenk Gursoy, %~\IEEEmembership{Senior Member,~IEEE,}
H. Vincent Poor,
Anke~Schmeink %~\IEEEmembership{Senior Member,~IEEE,} 
\vspace*{-1.5cm} 
}
\maketitle
    
\vspace{-5pt}
\begin{abstract}
To support ultra-reliable and low-latency services for  mission-critical applications,
transmissions are usually carried via short blocklength codes, i.e., in the so-called 
%The short packet communication with 
finite blocklength (FBL) regime. 
Different from the infinite blocklength regime where transmissions are assumed to be arbitrarily reliable at the Shannon's capacity, the reliability and capacity performances  of an FBL transmission are impacted by the coding blocklength.  
%codes is one of the most  promising way 
The relationship among reliability, coding rate, blocklength and channel quality has recently been characterized in the literature, considering the %well-known    Polyanskiy's 
FBL performance model. %, which has significant complex expressions in comparison to the Shannon capacity formula. 
In this paper, we follow this model,  and prove the joint convexity of the FBL error probability with respect to blocklength and transmit power within a region of interest, as a key enabler for designing systems to achieve globally optimal performance levels. Moreover, we apply the joint convexity to general use cases and efficiently solve the joint optimization problem in the setting with multiple users. We also extend the applicability of the proposed approach by proving that the joint convexity still holds in fading channels, as well as in relaying networks. Via simulations, we validate our analytical results and demonstrate the advantage of leveraging the joint convexity compared to other commonly-applied approaches.
\end{abstract}
\vspace*{-0.12cm}
\begin{IEEEkeywords}
finite blocklength regime, error probability, resource allocation, joint design, convexity
\end{IEEEkeywords}
\vspace{-10pt}

\section{introduction}
% In current and next-generation  wireless communication networks, %and beyond (B5G), 
% supporting   low latency and ultra reliability transmissions  has become crucial in order to enable novel mission-critical applications, such as vehicle-to-vehicle/infrastructure communications, augmented/virtual reality and factory automation~\cite{Bennis_tail}.

{Supporting   low-latency and ultra-reliability transmissions  is crucial for wireless networks in order to enable novel mission-critical applications in real-time, such as vehicle-to-vehicle/infrastructure communications, augmented/virtual reality and factory automation~\cite{Bennis_tail,Meng2022Sampling}. 
One of the key approaches to fulfill the stringent latency requirements of those applications is to employ short-packet communications carried out via finite blocklength (FBL) codes~\cite{URLLC_intro, Chen_MA}.} 
In such FBL scenarios, the transmission error   is no longer negligible, as a key departure from the so-called infinite blocklength (IBL) regime in which the transmissions are  arbitrarily reliable at the rate of Shannon's capacity. %  can no longer be accurately characterized by the Shannon capacity achieved with assumption of infinite blocklength. 
In 2010, Polyanskiy~\emph{et al.},  in the landmark work~\cite{Polyanskiy_2010},  derived a closed-form expression of the decoding error probability in the FBL regime, when transmissions occur over a channel with additive white Gaussian noise (AWGN). It has been shown that the decoding error probability of FBL codes is positive even if the coding rate is less than the Shannon capacity, and this probability is higher when the blocklength gets shorter. 
Subsequently, the FBL performance characterization in~\cite{Polyanskiy_2010} has been extended to Gilbert-Elliott 
channels~\cite{Polyanskiy_2011}, and quasi-static flat-fading 
channels~\cite{Yang_2014,Yang_ISIT2014}. 
{More recently, the FBL performance
has been analytically investigated considering cooperative relay networks~\cite{Hu_2015,Hu_letter_2015,Hu_TWC_2015,Li_2016}, non-orthogonal multiple access (NOMA) schemes~\cite{Ding_NOMA,Lai_NOMA_analysis,Ghanami_NOMA_analysis}, wireless power transfer~\cite{Alcaraz_WP_linear_approx_app2,Makki_2016_WP_linear_app}, system security~\cite{Chao_security} and green communications~\cite{Singh_2021_EE_V_1, Zhu_2022_EE}.}

%{from single FBL link to multi-user FBL network design}
%On the other hand, future Internet-of-Thing network calls for supporting 
{Additionally, the design of future Internet-of-Things (IoT) networks calls for the support of massive access, also known as massive Machine Type Communications (mMTC), while guaranteeing the stringent latency and reliability requirements of each device~\cite{Chen_MA}. Therefore, the short packet communications are likely to be carried out with FBL codes. Following the FBL model in~\cite{Polyanskiy_2010} to characterize the accurate error probability performance,} 
{ a set of optimal system designs has been provided for such low-latency multi-device   networks %for the short packet transmission 
via  blocklength allocation~\cite{Abdelsadek_blocklength_single, Sun_joint_optimization_alt}, power control~\cite{Hu_power_control,He_2021_power_SCA, Yuan_2022_power_approx} and transmission rate selection~\cite{Zheng_transmission_rate}. All the above designs address optimality with respect %conduct  optimal solutions 
to a single factor/parameter, e.g., the  blocklength, transmit power or the coding rate.} 
However, it is practically more important to investigate multi-factor joint designs, in which, for instance, joint resource allocation  with respect to both transmit power and blocklength is considered. A joint design via the optimization of multiple parameters %  definitely provides a higher (or at lest an equal) 
leads to improved performance compared to single-factor design, and helps the network meet the critical requirements of different devices.  
Moreover, from a theoretical point of view, in comparison  to a single factor optimization, a joint design  provides greater insights into the fundamental tradeoff among the considered factors and parameters especially in a multi-user scenario. Understanding such tradeoffs is particularly important  in designing resource-limited networks (e.g., in terms of coding blocklength and transmit power) supporting %critical  latency and reliability applications, i.e., realizing
multi-device low latency  and ultra reliable  transmissions.

A key challenge in obtaining the optimal solution of such a joint design  is %how to 
to characterize the principal properties (e.g., the joint convexity) %characterizing    the corresponding joint convexity 
of the design objective which %is formulated  based on the 
%address the joint FBL characterization based on the
%Polyanskiy's FBL model with  
involves complicated  normal approximations~\cite{Polyanskiy_2010}. 
In fact, the widely applied  FBL performance metrics introduced in~\cite{Polyanskiy_2010} have two  characteristics that complicate the design and analysis. First, the  expressions   for the    coding rate and     error probability   %are tight  for scenarios with practical interests with respect to blocklength and received signal-to-noise ratio (SNR), but
may not apply to extreme scenarios, e.g., with extremely  low SNR or short blocklength. %%%% what is the main perpurs 
Secondly, the expressions for the FBL coding rate and error probability are  significantly more complicated than the corresponding expression for the Shannon capacity, making model-based analytical derivations challenging.   
To address the first issue, existing studies usually 
%To based on either  the normal approximation~\cite{Polyanskiy_2010}
assume  that the values of the SNR and blocklength lie within certain regions of practical interest  where the expressions of the performance metrics are applicable. 
To tackle the second difficulty, % in the designs considering the FBL impacts, %various approaches are studied. For example, the expression
the original FBL model in~\cite{Polyanskiy_2010} has been   simplified in the following two ways: {On the one hand, a linear approximation of the FBL error probability has been proposed to facilitate the analytical derivations~\cite{Makki_linear_approx,Makki_linear_approx_2}, especially when the system involves multiple fading channels~\cite{Alcaraz_WP_linear_approx_app2,Lai_coop_NOMA,Makki_2016_WP_linear_app}} and random task arrivals~\cite{Yu_linear_approx_app1}. 
{On the other hand, instead of approximating the error expression, the channel dispersion can be simplified with a high SNR assumption~\cite{Ren_V_1_app3,She_V_1_app2,Sun_V_1_app1, Singh_2021_EE_V_1}.} 
Recently, based on the above two approaches (by either  assuming certain  SNR and blocklength regions of practical interest or simplifying the FBL model), a set of designs is provided in~\cite{Sun_joint_optimization_alt,Haghifam_alternative,Han_massive_NOMA}, e.g.,  %{joint XX-XX allocation, XXXX}, 
by optimizing with respect to multiple factors   based on the  concavity/convexity of the performance metric as a function of individual factors separately.
%With the concave/convex feature of single parameter,
{{R.2.7-6}In particular, based on the convexity of the performance metric as a function of each single parameter, multi-factor joint optimization problems have been addressed via an alternating search~\cite{Sun_joint_optimization_alt,Haghifam_alternative}, successive convex approximation~\cite{He_2021_power_SCA} or integer convex optimization~\cite{Ren_V_1_app3,Han_massive_NOMA,Chao_security} (solving  all single-parameter sub-problems with different feasible integer values of blocklength)}. %{Specially, the dimension of such designs is reduced if the relationship between variables can be determined with a  closed-form~\cite{ourJSAC,Xu_reduce}. } % Why we mention this here 

However, the aforementioned approaches face various challenges: the accuracy of a  simplified FBL model~\cite{Lancho_why_normal_approx} is often an issue, global optimum is difficult guarantee, and the computational complexity of  solution search is usually high. 
%including the accuracy of approximating the FBL model, guaranteeing  the global optimum~\cite{Tseng_BCD} and the computational complexity of obtaining the solution~\cite{Papadimitriou_integer_complexity}. 
Therefore, %it highly motivates us
it is of interest to investigate the joint convexity to characterize performance using the  original FBL model     with the normal approximation. %, especially with respect to blocklength and transmit power.
%due to the many advantages of convex optimization. 
In fact, although the joint convexity of the error probability approximation based on the original model can be observed (within certain regions of interest) in a set of existing works via numerical simulations~\cite{ourJSAC,Hu_numerical_convex,Avranas_numerical_convex}, to the best of our knowledge, proving this joint convexity and determining such a region is still an open challenge. 

In this paper, we study the joint convexity of error probability with respect to blocklength and transmit power based on the original FBL model in~\cite{Polyanskiy_2010} and analyze its applicability in realistic multi-user scenarios. The main contributions of this paper can be summarized as follows:
\begin{itemize} 
    \item %As the fundamental characteristic, 
    We first prove the joint convexity of the FBL error probability with respect to the transmit power and coding blocklength within a certain region of interest. %without applying any non-tight approximations, especially with reliable transmissions.
    \item We then study the general use case of joint convexity in a setting, in which we encounter a non-convex optimization problem due to a non-convex constraint, complicating the identification of efficient joint resource allocation strategies. By exploiting a novel variable substitution method, we reformulate the optimization problem, prove its convexity, and solve it optimally with low complexity compared to the commonly applied methods.
    \item To further extend the applicability of our results, we analyze two additional practical scenarios, one with a fading channel and the other involving a cooperative relaying system. We prove that the joint convexity still holds for both cases.
    \item Via simulations, we validate our analytical results and show the significant advantages of our proposed joint optimization approaches compared to benchmark schemes, emphasizing especially the scalability to networks with massively many devices.
\end{itemize}

The remainder of the paper is organized as follows. In Section II, we present the FBL model and introduce the preliminary concepts, formulations and approximations. The joint convexity is characterized in Section III. In Section IV,  the use case of resource allocation is studied. The applicability of our results to other practical scenarios is investigated in Section V. Numerical results are presented in Section VI. Finally, we conclude the paper in Section VII. Several proofs are relegated to the Appendix. 
%there are some studies with multi-factor joint designs are approached via non-tight approximations~\cite{Q_fuctnion_Z_approx} or alternating search~\cite{ourJSAC} which is inefficient and can not guarantee the global optimum. In fact,   the FBL characterization is based on normal approximation. Hence, the expression  of the FBL   coding rate (the same for the FBL decoding error probability)  is tight  for scenarios with practical interests with respect to blocklength and received signal-to-noise ratio (SNR), but is not tight for extreme scenarios, i.e., with extreme  low SNR or short blocklength. 
%In addition, the expressions of the FBL coding rate of error probability are  significantly complex than the expression of the  Shannon Capacity. 

%The optimization of considered problem in most existing works leveraged the convex properties of error probability. Particularly, the error probability is convex(concave) in either blocklength(packet size) or transmit power, respectively. However, the convex properties regarding to single variable is not sufficient to solve the problem involves optimizing both blocklength and transmit power.  
\vspace{-5pt}
\section{FBL Error Model and Preliminary Analysis}
\subsection{Error probability in the FBL regime}
Consider a general  wireless communication system with FBL codes, where the source transmits  a data packet containing $D$ bits using a codeword with blocklength $m$ via a wireless link to a receiver.  Assume we have perfect channel state information %at receiver (perfect CSIR) 
and the channel gain $z$ (including path-loss) is constant over the entire blocklength $m$.  {Then, the signal-to-noise ratio (SNR) of the received packet at the receiver is $\gamma=zp/\sigma^2$, where $p$ is the transmit power and $\sigma^2$ represents the noise spectral density. }

As the transmission is carried out in the FBL regime, i.e., the blocklength $m$ is no longer sufficiently large to be considered infinite, the assumption of communicating arbitrarily reliably at the Shannon limit is no longer accurate. In other words, the transmission can be erroneous, even if the coding rate is less than the Shannon capacity. In particular, the maximal achievable transmission rate with target error probability $\varepsilon_0$ is given approximately by~\cite{Polyanskiy_2010}
\begin{equation}
\label{eq:maximal_rate}
    r^*\approx \mathcal{C(\gamma)}-\sqrt{\frac{V(\gamma)}{m}}Q^{-1}(\varepsilon_0),
    \vspace{-5pt}
\end{equation}
where  ${\mathcal{C}} = {\log _2}( {1 + \gamma } )$ is the Shannon capacity, and ~$V$ is the channel dispersion~\cite{Chem_2015}. In particular, we have $V=1- {(1+\gamma)^{-2}}$ in a channel with fixed gain. Moreover, $Q^{-1}$ is the inverse Q-function with $Q(x)=\int^\infty_x \frac{1}{\sqrt{2\pi}}e^{-\frac{t^2}{2}}dt$.

%Note that the transmission with FBL codes can be erroneous even if the transmission rate is lesser than Shannon capacity. 
According to~\eqref{eq:maximal_rate}, we can also approximately characterize the   (block) error probability with given transmission rate $r=\frac{D}{m}$ %for the AWGN channel 
as follows:
\begin{equation}
\label{eq:error_tau}
%\!\varepsilon_n \!=\!
{\textstyle
\varepsilon \!=\! {\mathcal{P}}(\gamma,r,m)\! \approx\! Q\Big( {\sqrt {\frac{m}{V(\gamma)}} ( {{\mathcal{C}}(\gamma ) \!-\! r)} {\log_e}2} \Big)\mathrm{,}}
\vspace{-5pt}
\end{equation}
{which provides a tight approximation on the error probability in the FBL regime, and has been widely applied [8]-[30]. In the remainder of the paper, we consider this expression as the FBL performance metric and characterize its convexity.}

% \begin{figure}[!b]
%     \centering
% \includegraphics[width=0.5\textwidth,trim=0 10 0 0]{error_comparison.eps}
% \caption{Performance comparison between original expression of~\eqref{eq:error_tau}, linear approximation in~\eqref{eq:linear_approx} and letting $V\approx1$.}
% \label{fig:err_comparison}
% \end{figure}
\begin{figure}[!t]
    \centering
    \vspace{-20pt}
\includegraphics[width=0.57\textwidth,trim=0 10 0 0]{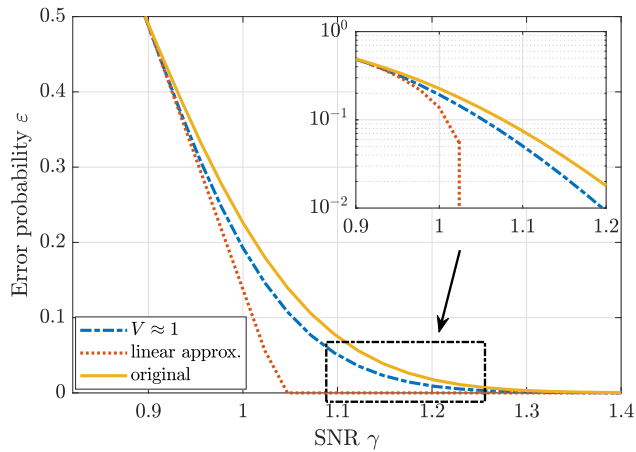}
\caption{Comparison of the error probability values computed by using the original expression in~\eqref{eq:error_tau}, linear approximation in~\eqref{eq:linear_approx}, and the approximation obtained by letting $V\approx1$. {
Note that the   curves in the zoomed-in subfigure  are convex in linear scale. The reason that they look  non-convex is due to the fact that the  subfigure is plotted in a logarithmic scale on the y-axis (in order to more clearly demonstrate the performance gap between the approximations and the original expression in a region with relatively low error probabilities).}}
\label{fig:err_comparison}
\end{figure}
\vspace*{-10pt}
\subsection{Preliminary Analysis via Additional Approximation Approaches}
Although the Equation in~\eqref{eq:error_tau} provides an accurate approximate formula for the error probability in the FBL regime, it is still analytically difficult to apply~\eqref{eq:error_tau} in efficiently addressing the related optimization problems due to its non-convexity.
In particular, $\varepsilon$ is neither a convex nor a concave function with respect to either $m$ or $\gamma$, since the Q-function itself initially behaves as a concave function for negative values of its argument and then becomes convex. To overcome this challenge, mainly two further approximation approaches have been applied in the literature:
%{flow of section: emphasize the inaccuracy of two approximations - first paragraph what is it? second paragraph why is it bad?}
\subsubsection{Linear approximation of $\varepsilon$} {By applying the linearization technique on the Q-function, the error probability in~\eqref{eq:error_tau} can be approximated with three line segments~\cite{Makki_linear_approx,Makki_linear_approx_2}, i.e.,}
\begin{equation}\label{eq:linear_approx}
    \varepsilon\approx
    \begin{cases}
        1,\quad \gamma <\alpha-\frac{1}{2\mu}\\
        \frac{1}{2}-\mu(\gamma-\alpha),\quad \alpha-\frac{1}{2\mu}\leq \gamma \leq \alpha+\frac{1}{2\mu}\\
        0,\quad \gamma\geq \alpha+\frac{1}{2\mu},
    \end{cases}
    \vspace{-5pt}
\end{equation}
where $\alpha=e^r-1$ and $\mu=-\sqrt{\frac{m}{2\pi(e^{2r}-1)}}$. 
{Due to the simplicity of the expression, it has been widely applied in the literature, especially when dealing with fading channels~\cite{Makki_2016_WP_linear_app, Alcaraz_WP_linear_approx_app2,Yu_linear_approx_app1}.}  

However, the major drawback of  such an approximation lies in the performance difference compared with~\eqref{eq:error_tau}. In particular, the approximation is tight when the error probability is relatively high, whereas the gap becomes significant when one enters the regime of reliable transmission with low error probabilities as shown in Fig.~\ref{fig:err_comparison}. In other words, the more reliable the transmission is, the less accurate the approximation becomes, which is counter to the purpose of studying the FBL model. For instance, enabling ultra-reliable low-latency communication (URLLC) services demands  an error probability as low as $10^{-5}$, for which the linear approximation is unable to accurately represent the actual error probability, as indicated in Fig.~\ref{fig:err_comparison}. Therefore, the applicability of such an approximation is weakened when we aim at investigating the reliability performance. 

\subsubsection{Constant approximation of $V$} 

%{ state clearly, on what regime makes $r\leq C$. we want $x$ non-negative in $Q(x)$ according to (2). }

Focusing on relatively reliable transmissions only, we can narrow the region of coding rates by assuming that $r\leq \mathcal{C}$. In particular, note that the sign of the argument of the Q-function $Q(x)$ is determined by the sign of $\sqrt{\frac{m}{V}}(\mathcal{C}-r)$. Therefore, if $r\leq \mathcal{C}$ does not hold, we have $x<0$, resulting in $\varepsilon=Q(x)>0.5$, which is too high to be of interest in practical applications requiring high reliability.
In the regime with $r\leq \mathcal{C}$, the Q-function $Q(x)$ is convex with respect to $x$. However, similar convex properties  of $\varepsilon$ with respect to other variables, e.g., $m$ or $\gamma$, are still obscure due to the complicated dependencies within \eqref{eq:error_tau}. 
%the Q-function. 
To simplify the expression, the channel dispersion $V$ in the AWGN channel can be approximated as a constant instead of being expressed as a function of SNR~\cite{Sun_V_1_app1,She_V_1_app2,Ren_V_1_app3}, i.e., $V\approx 1$. 

However, it should be pointed out that the approximation is only accurate when the SNR is high and becomes loose in the  medium range of SNR as shown in Fig.~\ref{fig:err_comparison}, which is still within the region of interest. Especially when we are dealing with the error probability instead of the maximal achievable coding rate, the performance gap is  non-negligible. In fact,  $V\approx 1$ implies that the SNR $\gamma \gg 1$. 
%$\gamma\to\infty$, 
which could be a strong assumption that oversimplifies the influence of $\gamma$~\cite{Polyanskiy_2011}.
Therefore, it is worthwhile to utilize~\eqref{eq:error_tau} directly without further approximations. Moreover, to enable the joint resource allocation approach with respect to both blocklength and power, the characterization can be more impactful if we are able to establish the joint convexity. In view of this, we analyze the convexity of $\varepsilon$ in detail in the following section.

\section{Characterization of the Convexity of Error Probability}

Recall that we are interested in the tail behavior of the system~\cite{Bennis_tail}, i.e., the transmissions are reliable with error probability $\varepsilon \leq \varepsilon_{\max}\ll 1$. In this regime, the performance cannot be guaranteed if the SNR is extremely low and/or the transmission rate is higher than the Shannon capacity. Therefore, in the considered work, we assume that $\varepsilon \leq \varepsilon_{\max}=0.1$, $\gamma\geq \gamma_{\rm th} =0$ dB and $\mathcal C(\gamma)\geq r$ always hold to facilitate the derivations. {Furthermore, we assume that the integer constraint on the blocklength can be relaxed\footnote{{We can obtain the integer solution of $m$ by comparing the possible integer neighborhood of the non-integer solution, i.e., $m_{\rm int}=\arg \min\limits_{m\in \{\lfloor m^* \rfloor, \lceil m^* \rceil\}} f(m)$, where $m^*$ is the non-integer optimal solution and $f$ is the objective function.}}, i.e., from $m \in \mathbb{Z_{+}}$ to $m \in \mathbb{R_{+}}$.} Then, we can establish the following lemma to characterize the convexity of $\varepsilon$ in the regime of interest:

%{ add index to m and p -> $m_i$ and $p_i$}

\begin{theorem}
\vspace{-5pt}
\label{lemma:joint_convex}
Consider an FBL transmission with conditions $\varepsilon\leq \varepsilon_{\max}$, $\gamma\geq 1$ and $ \mathcal C(\gamma)\geq r$. The error probability  is jointly convex in %{both} 
blocklength  and transmit power, i.e., $\varepsilon$ is convex in $(m,p)$, if %{(should we avoid to use 'both' while describing joint convexity)}
\begin{equation}
\label{eq:condition_1}
            \begin{split}
    r&>\frac{2}{\ln2(9(\gamma+1)^2-1)}\Big(-2\gamma^2-4\gamma %\\
    +\!\sqrt{4(\gamma^2\!+\!2\gamma)^2\!+\!\frac{2}{5}(9(\gamma\!+\!1)^2\!-\!1)\ln^2(\gamma\!+\!1)}\Big)	
        	\end{split}
        	\vspace{-5pt}
\end{equation}
or {more strictly}
\begin{equation}
\label{eq:condition_2}
    \gamma\geq \max\left\{ \frac{1}{5r\ln2},~ \frac{8}{45r^2\ln^22}\right\}.
    \vspace{-5pt}
\end{equation}
%{(Seems the conditions can be simplified as $r\geq \min\{f_1(\gamma),\max\{f_2(\gamma),f_3(\gamma)\}\}$, in case it can facilitate the expression and following discussions.)}
\end{theorem}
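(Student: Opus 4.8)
The plan is to work with the equivalent pair $(m,\gamma)$ instead of $(m,p)$: since $\gamma=zp/\sigma^2$ is an affine function of $p$, joint convexity of $\varepsilon$ in $(m,p)$ is equivalent to joint convexity in $(m,\gamma)$. I would write $\varepsilon=Q(g)$ with argument $g(m,\gamma)=\ln 2\,\sqrt{m}\,h(\gamma)$, where $h(\gamma)=(\mathcal{C}(\gamma)-r)/\sqrt{V(\gamma)}$, and compute the Hessian of $\varepsilon$ by the chain rule. Using $Q'(x)=-\tfrac{1}{\sqrt{2\pi}}e^{-x^2/2}=:-\varphi(x)$ and $Q''(x)=x\,\varphi(x)$, this gives $\nabla^2\varepsilon=\varphi(g)\big[\,g\,\nabla g\,(\nabla g)^\top-\nabla^2 g\,\big]$. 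Because $\varphi(g)>0$ and, under the standing assumptions $\varepsilon\le\varepsilon_{\max}$ and $\mathcal{C}\ge r$, the argument obeys $g\ge Q^{-1}(\varepsilon_{\max})>0$, proving the lemma reduces to showing that the $2\times 2$ matrix $M:=g\,\nabla g\,(\nabla g)^\top-\nabla^2 g$ is positive semidefinite.

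Next I would exploit the separable structure $g=\ln 2\,\sqrt{m}\,h(\gamma)$, which makes the derivatives of $g$ factorize, e.g. $g_m=g/(2m)$, $g_{mm}=-g/(4m^2)$, and $g_{m\gamma}=g_\gamma/(2m)$. The $(m,m)$ entry then evaluates to $M_{11}=g(g^2+1)/(4m^2)>0$, so for a symmetric $2\times 2$ matrix positive semidefiniteness is equivalent to $\det M\ge 0$. Computing the remaining entries and simplifying yields the compact identity
\[
\frac{4m}{(\ln 2)^2}\,\det M = g^2\big(3(h')^2-h\,h''\big)-\big((h')^2+h\,h''\big),
\]
so the whole problem collapses to the single scalar inequality $g^2A-B\ge 0$, with $A:=3(h')^2-h\,h''$ and $B:=(h')^2+h\,h''=\tfrac12(h^2)''$.

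The crux, and the reason that the Q-function's convexity alone is not enough, is that $h''$ is sign-indefinite (it inherits the Q-function's inflection), so neither $B\le 0$ nor $A\ge 0$ holds a priori. I would resolve this in two steps. First, after substituting the closed forms of $h,h',h''$ (writing $s=1+\gamma$, $L=\ln 2\,(\mathcal{C}-r)$, and $P=s^2-1$ turns $A$ and $B$ into explicit polynomials in $L$ over powers of $P$), I would show $A\ge 0$ throughout the region $\gamma\ge 1$, $\mathcal{C}\ge r$. Given $A\ge 0$, the term $g^2A$ is nondecreasing in $m$ at fixed $(\gamma,r)$, so the worst case is the smallest admissible $g^2$, namely $g^2=(Q^{-1}(\varepsilon_{\max}))^2$ forced by the reliability constraint $\varepsilon\le\varepsilon_{\max}$. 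Substituting this bound turns $g^2A-B\ge 0$ into a one-dimensional inequality in $r$ for each fixed $\gamma$; bounding this quadratic-type expression in closed form and applying the quadratic formula then produces the explicit threshold \eqref{eq:condition_1}, and relaxing it further through the elementary estimate $\ln(1+\gamma)\le\gamma$ while separating the two dominant terms gives the simpler, stricter condition \eqref{eq:condition_2}.

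I expect the main obstacle to be this second step. Establishing $A\ge 0$ and converting $g^2A-B\ge 0$ into a clean closed-form bound requires controlling several rational expressions whose numerators change sign — in particular on the sub-range $1\le\gamma<\sqrt5-1$, where the coefficient $s^2-5$ appearing in $A$ is negative — and some slack has to be accepted in order to keep the final threshold analytically explicit. This is precisely why \eqref{eq:condition_1}, and a fortiori \eqref{eq:condition_2}, are stated as sufficient (rather than tight) conditions for the joint convexity.
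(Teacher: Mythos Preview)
Your proposal has a genuine gap at the very first step: the argument of the Q-function is \emph{not} separable in $(m,\gamma)$ as you claim. In the paper's model the data size $D$ is fixed and the coding rate is $r=D/m$, so the true inner function is
\[
w(m,\gamma)=\ln 2\,\sqrt{\tfrac{m}{V(\gamma)}}\Big(\mathcal{C}(\gamma)-\tfrac{D}{m}\Big)
=\ln 2\,\frac{\mathcal{C}(\gamma)}{\sqrt{V(\gamma)}}\,\sqrt{m}\;-\;\ln 2\,\frac{D}{\sqrt{V(\gamma)}}\,\frac{1}{\sqrt{m}}.
\]
Your ansatz $g=\ln 2\,\sqrt{m}\,h(\gamma)$ with $h$ depending only on $\gamma$ silently treats $r$ as a constant and drops the second term. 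Every identity you build on ``separability'' --- $g_m=g/(2m)$, $g_{mm}=-g/(4m^2)$, $g_{m\gamma}=g_\gamma/(2m)$, and the compact formula $\tfrac{4m}{(\ln 2)^2}\det M=g^2A-B$ with $A,B$ functions of $\gamma$ alone --- is therefore incorrect for the actual problem. The paper's computation of $\partial w/\partial m$ confirms this: it carries two terms, one $\propto m^{-1/2}$ and one $\propto m^{-3/2}D$, and the latter survives into the cross-derivative and the determinant, which is where the $r$-dependence of conditions \eqref{eq:condition_1}--\eqref{eq:condition_2} ultimately comes from.

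There is also a structural divergence that matters for reproducing the stated thresholds. The paper does \emph{not} analyse the full Hessian of $\varepsilon$; it proves the stronger statement that $w$ is jointly \emph{concave} in $(m,\gamma)$ and then invokes ``convex decreasing composed with concave is convex''. The determinant of $\nabla^2 w$ is decomposed as $\Delta_3(\gamma)+\Delta_4(\gamma,r)+\Delta_5(\gamma,r)$; the first two are shown nonnegative for all $\gamma\ge 1$, and requiring $\Delta_5\ge 0$ is exactly the quadratic in $r$ whose root is \eqref{eq:condition_1} (with \eqref{eq:condition_2} obtained by comparing the dominant positive terms in $\Delta_5$). This route only uses $w\ge 0$, never the numerical value of $Q^{-1}(\varepsilon_{\max})$. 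Your plan, by contrast, would substitute $g^2\ge(Q^{-1}(\varepsilon_{\max}))^2$ at the worst-case step, which injects $\varepsilon_{\max}$ into the threshold and cannot reproduce the $\varepsilon_{\max}$-free conditions \eqref{eq:condition_1}--\eqref{eq:condition_2} as stated.
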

A sketch of the proof is as follows (a detailed proof is provided in Appendix A):
\begin{enumerate}
    \item %We determine all the signs of first and second derivative of error probability $\varepsilon$ w.r.t. blocklength $m$ and power~$p$.{(
    We first prove the convexity and monotonically decreasing property of error probability $\varepsilon$ with respect to $w=\sqrt{\frac{m}{V(\gamma)}}\big(\mathcal{C}(\gamma)-\frac{D}{m}\big)\ln{2}$.%)}
    \item %We prove the upper-left element of Hessian matrix is non-negative. {(
    Then we prove that the upper-left element in the Hessian matrix $\mathbf{H}$ of $w$ over $m$ and $\gamma$ is non-negative.%)}
    \item {The determinant of the Hessian matrix is expressed in terms of a set of auxiliary functions, where the sign of each of them is proven to be non-negative under the condition in~\eqref{eq:condition_1}.}
    \item The condition in~\eqref{eq:condition_2} is also shown to provide an intuitive physical meaning.
\end{enumerate}
{\textbf{Remark:} \emph{{Numerically, the condition~\eqref{eq:condition_1} is fulfilled if %{
$r=\frac{D}{m}\geq 0.0448   \ [\text{bits/blocklength}]$, which is true for most practical FBL applications in the region of interest. For the extreme case, a more intuitive condition~\eqref{eq:condition_2} can be applied to provide the condition between SNR $\gamma$, namely Shannon capacity $\mathcal{C}(\gamma)$, and coding rate $r$ to guarantee the convexity\footnote{In fact, (2) is only accurate if the SNR is greater than 1 and coding rate is sufficiently high~\cite{Lancho_why_normal_approx}. In the remainder of the paper, we assume that condition (4) is implicitly fulfilled.}. 
In other words, it indicates a maximal blocklength constraint with given information bits.} It is worth mentioning that Lemma~\ref{lemma:joint_convex} is also valid for the cases with the linear approximation of $\varepsilon$ and constant approximation of $V$, for which 
 we can relax the constraint~\eqref{eq:condition_1} and $\gamma >1$.}} 
 
%{ move Problem 7 to section IV, keep (6) and Corollary.

%Another possibility: move Problem 7 in the end of section III, keep (6) and Corollary.
%}
By applying this characterization of convexity, we can now efficiently solve various joint optimization problems involving  the error probability. To illustrate this, let a $K$-th order polynomial with non-negative coefficients to be the general error probability penalty/cost function $f(\mathbf{m},\mathbf{p})$  in terms of blocklength $\mathbf{m}=\{m_1,\dots,m_N\}$ and normalized transmit power $\mathbf{p}=\{p_1,\dots,p_N\}=\{\gamma_1,\dots,\gamma_N\}$, each with $N$ variables. We further denote by $\varepsilon_i(m_i,p_i)$ the FBL error probability with respect to $m_i$ and $p_i$. Given these, we can express the penalty function $f(\mathbf{m},\mathbf{p})$ as
\begin{equation}
f(\mathbf{m},\mathbf{p})=\sum^K_{k=1} \sum^N_{i=1} a_{i,k} \varepsilon^k_i(m_i,p_i),
\vspace{-5pt}
\end{equation}
 with $a_{i,k}\geq0$ denoting the non-negative coefficients. Then, we have the following corollary to characterize the expression.
\begin{theorem1}
\vspace{-5pt}
\label{corollary:joint_convex}
The penalty function $f(\mathbf{m},\mathbf{p})$ is also jointly convex in ($\mathbf{m}$, $\mathbf{p}$) under the conditions $\varepsilon_i\leq \varepsilon_{\max}$, $\gamma_i\geq 1$ and $ \mathcal C(\gamma_i)\geq r_i$. %{high-reliability condition} %{(Seems we can also prove the convexity via: $x^k$ convex and increasing so $\varepsilon^k$ also convex, in case we need to shorten the proof.)}
\end{theorem1}
\begin{proof}
\vspace{-5pt}
For the $k$-th term $a_k\varepsilon^k_i$ of the polynomial, we have
%\begin{equation}
\begin{align}
   \frac{\partial^2 (a_k\varepsilon^k_i)}{\partial \mathbf{m}^2}=\frac{\partial^2 (a_k\varepsilon^k_i)}{\partial m^2_i} \nonumber %\\
    %=&
     =&\left\{
    \begin{array}{lr}
        a_k k(k-1)\varepsilon^{k-2}\left(\frac{\partial \varepsilon_i}{\partial m_i}\right)^2+a_kk\varepsilon^{k-1}\frac{\partial^2 \varepsilon_i}{\partial m^2_i},& \text{if}\ k\geq 2\\
        a_k\frac{\partial^2 \varepsilon_i}{\partial m^2_i},&\  \text{if}\ k=1\\
        \ 0,& \text{if}\ k=0
    \end{array}\right. \nonumber \\
    \geq & 0.
    \vspace{-5pt}
\end{align}
%\end{equation}
\vspace{-5pt}
Similarly, we can show that \vspace{-5pt}
\begin{equation}
    \frac{\partial^2 (a_k\varepsilon^k_i)}{\partial \mathbf{m}^2}\frac{\partial^2 (a_k\varepsilon^k_i)}{\partial \mathbf{p}^2}-\left(\frac{\partial^2 (a_k\varepsilon^k_i)}{\partial \mathbf{m}\partial \mathbf{p}}\right)^2\geq 0,
    \vspace{-5pt}
\end{equation}
regardless of the value of $k$. Therefore, $a_k\varepsilon^k_i$ is jointly convex in $\mathbf{m}$ and $\mathbf{p}$. As a sum of convex functions, $f(\mathbf{m},\mathbf{p})$ is also jointly convex in $\mathbf{m}$ and $\mathbf{p}$.
\end{proof}

Based on Corollary~\ref{corollary:joint_convex}, any problem that has the following form within a convex feasible regime is  convex:
    \vspace*{-10pt}
\begin{mini!}[2]
{\mathbf{m},\mathbf{p}}{f(\mathbf{m},\mathbf{p})}
{\label{problem_general}}{}
\addConstraint{\varepsilon_i}{\leq \varepsilon_{\max}, \forall i=1,\dots,N}
\addConstraint{p_{\min}}{\leq p_i \leq p_{\max}, \forall i=1,\dots,N}
\addConstraint{0}{\leq m_i \leq m_{\max},\forall i=1,\dots,N.}
\vspace{-5pt}
\end{mini!}
Clearly, as a convex problem, Problem~\eqref{problem_general} can be efficiently solved. On the other hand,  it should be pointed out that the  constraints of Problem~\eqref{problem_general}
are general and can be loose, while 
% which may  not %able to directly fit with
many practical scenarios have more specific  and strict constraints which are possibly  non-convex. For example, 
constraints that include coupled variables, such as total energy consumption restrictions, are non-convex.  
%the feasible regime   can be non-convex or , e.g., restricted total energy consumption. 
%Hence, it prevents us to directly leverage our analytical results in Lemma~\ref{lemma:joint_convex} and Corollary~\ref{corollary:joint_convex} to solve practical optimization problems.
To demonstrate how our analytical results in Lemma~\ref{lemma:joint_convex} and Corollary~\ref{corollary:joint_convex} facilitate the joint optimal design with practical and non-convex constraints, {in the next section we analyze a general use case that is encountered frequently when dealing with multi-factor resource allocation with FBL codes.}
%%However, in practical scenario,  In the next section, we study a use case of joint  resource allocation over multiple users to demonstrate  the approach of applying our analytical results.
\vspace{-5pt}
\section{Use Case: Joint  Resource Allocation }
\label{sec:use_case}
{Consider a general multi-user scenario~\cite{Ren_V_1_app3,Sun_joint_optimization_alt}, where a set of low-latency and reliable transmissions of $N$ user equipments (UEs) are carried out via orthogonal multiple access, e.g., time-division multiple access (TDMA) or orthogonal frequency-division multiple access (OFDMA)\footnote{{We assume that the channels are frequency-flat~\cite{Ren_V_1_app3,Han_AoI}}.}.} %{(Should we avoid mentioning references here? it feels a little bit aggressive)}
 In particular, the transmission consists of $N$ slots with blocklength  $\mathbf{m}=\{m_1,..,m_N\}$, in which the server transmits packets with sizes of $\mathbf{D}=\{D_1,\dots,D_N\}$ information bits using normalized transmit power $\mathbf{p}=\{p_1,\dots,p_N\}=\{\gamma_1,\dots,\gamma_N\}$ to $N$ UEs, respectively. We define $i\in\mathcal{N}$ as the index of a user and $\mathcal{N}$ is the index set. Furthermore, the total available energy consumption is constrained by $E$ and total available  blocklength by $M$. We aim at minimizing the maximal error probability among those UEs by jointly optimizing blocklength and transmit power for each UE. To ensure the quality of transmission, the error probability of each transmission is below a threshold $\varepsilon_{\max}$ while the SNR for each user is greater than  a threshold $\gamma_{\rm th}\geq 1$. Then, the problem can be formulated as follows
     \vspace*{-10pt}
\begin{mini!}[2]
    {\mathbf{m},\mathbf{p}}{\max\{\varepsilon_{i}\}}
    {\label{problem_downlink}}{}
    \addConstraint{\sum^N_{i=1} m_i}{\leq M}
    \label{con:total_M}
    \addConstraint{\sum^N_{i=1}m_ip_i}{\leq E}
        \label{con:total_E_non_convex}
    \addConstraint{\varepsilon_i\leq\varepsilon_{\max},}{\forall i\in \mathcal{N}}
    \addConstraint{\gamma_i\geq \gamma_{\rm th},}{\forall i\in \mathcal{N}.}
    \label{con:energy_nonconvex}
    \vspace{-5pt}
\end{mini!}
 The objective function is jointly convex in~$\mathbf{m}$ and~$\mathbf{p}$ according to Lemma~\ref{lemma:joint_convex}, while the constraint~\eqref{con:total_E_non_convex} is   non-convex. 
 Hence,   problem~\eqref{problem_downlink} is non-convex in its current form. 
 In the following, we address this non-convex problem via  different strategies. 
% first investigate the approach transferring  it into a convex one.
{In particular, we first
provide an optimal solution via a variable substitution method. Meanwhile,  an alternating search and an integer search (both of which are based on the convex features of a single variable) are presented as benchmarks. It should be pointed out that these benchmarks are also the state-of-the-art approaches to provide multi-factor designs we discussed in Section I.  
At the end of the section, a comparison among all three approaches is provided to show the performance advantages of applying the characterized joint convexity in terms of   optimality, complexity and scalability.
%\dots we first\dots based on \dots., then\dots In the end of section\dots
}

%discuss two approaches partially leveraging the convex feature to solve the joint optimization problem, i.e., reformulating constraint~\eqref{con:total_E_non_convex} to be convex while maintaining the convexity of $\varepsilon_i$.
%{ swap partially convex and jointly convex}
\vspace{-5pt}
\subsection{Solution Based on Joint Convexity}
%Since both above approaches have certain disadvantage, we are interested in how to turn Problem~\eqref{con:energy_nonconvex} into a jointly convex problem, i.e., reformulating constraint~\eqref{con:total_E_non_convex} to be convex while maintaining the convexity of $\varepsilon_i$. 
In particular, $\sum^N_{i=1}m_ip_i\leq E$ is a non-convex constraint, since both $m_i$ and $p_i$ are optimization variables, i.e., we  jointly optimize the blocklength and transmit power. To address this issue without compromising the constraint (e.g., by fixing one of the variables or decomposing the constraint as $m_ip_i\leq E_{\max,i}$, $\forall i\in\mathcal{N}$), we exploit the following variable substitution method: Let $m_i=\frac{1}{a_i}$ and $p_i=b^2_i$, for $i\in \mathcal{N}$. Therefore, we can reformulate problem~\eqref{problem_downlink} with new variables $\mathbf{a}$ and $\mathbf{b}$ as follows 
    \vspace*{-10pt}
\begin{mini!}[2]
{\mathbf{a},\mathbf{b}}{\max_{i\in\mathcal{N}}\{\varepsilon_{i}\}}
{\label{problem_downlink_reformulated}}{}
\addConstraint{\sum^N_{i=1}\frac{1}{a_i}}{\leq M}
\addConstraint{\sum^N_{i=1} \frac{b^2_i}{a_i}}{\leq E}
\label{con:energy_nonconvex_reformulated}
\addConstraint{\varepsilon_i\leq\varepsilon_{\max},}{\ \forall i \in \mathcal{N}}
\addConstraint{\gamma_i\geq \gamma_{\rm th},}{\ \forall i \in \mathcal{N}.}
\vspace{-5pt}
\end{mini!}
It is trivial to prove the convexity of the reformulated constraint~\eqref{con:energy_nonconvex_reformulated}, since it is the sum of $N$ convex functions. Therefore, we focus on the convexity of $\varepsilon_i$. Due to the substitution of variables, Lemma~\ref{lemma:joint_convex} should be revised, since it is no longer sufficient in proving the joint convexity. To this end, we exploit the following lemma to characterize the objective function:
\vspace*{-5pt}
\begin{theorem}
\label{lemma:joint_convex_reformulated}
The objective function of problem~\eqref{problem_downlink_reformulated} is jointly convex in $\mathbf{a}$ and $\mathbf{b}$ within the feasible set defined as $\varepsilon_i\leq\varepsilon_{\max}$ and $\gamma_i\geq \gamma_{\rm th}$, $\forall i\in\mathcal{N}$.
\end{theorem}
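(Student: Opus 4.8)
The plan is to reduce the joint convexity of the objective to a single-user statement and then re-examine the resulting $2\times 2$ Hessian under the change of variables. Since the objective $\max_{i\in\mathcal N}\{\varepsilon_i\}$ is a pointwise maximum of the $\varepsilon_i$, and each $\varepsilon_i$ depends only on its own pair $(a_i,b_i)$, its Hessian on the full $(\mathbf a,\mathbf b)$-space is block diagonal. Hence it suffices to prove that each term $\varepsilon_i=Q(w_i)$ is jointly convex in $(a_i,b_i)$, where $w_i=\sqrt{m_i/V(\gamma_i)}\,(\mathcal C(\gamma_i)-D_i/m_i)\ln 2$ with $m_i=1/a_i$ and $\gamma_i=b_i^2$; since pointwise maxima of convex functions are convex, this yields the claim. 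I drop the index $i$ below.

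The key tool is the Gaussian identity $Q''(w)=-w\,Q'(w)$, combined with the first step of the proof of Lemma~\ref{lemma:joint_convex}, which gives $Q'(w)<0$ and, on the feasible set where $\varepsilon\le\varepsilon_{\max}$, the strict lower bound $w\ge Q^{-1}(\varepsilon_{\max})>0$. Writing the Hessian of the composition $\varepsilon=Q(w)$ in the variables $(a,b)$ as $\nabla^2\varepsilon=Q''(w)\,\nabla w\,\nabla w^{\!\top}+Q'(w)\,\nabla^2 w$ and factoring out $-Q'(w)=|Q'(w)|>0$, I obtain $\nabla^2\varepsilon=|Q'(w)|\,M$ with $M:=w\,\nabla w\,\nabla w^{\!\top}-\nabla^2 w$. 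Thus convexity of $\varepsilon$ is equivalent to $M\succeq0$, which mirrors exactly the structure exploited for Lemma~\ref{lemma:joint_convex} but must now be re-established in the substituted variables.

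To evaluate $M$, I use the chain rule for the diagonal substitution $m=1/a$, $\gamma=b^2$, so that $\partial_a=-m^2\partial_m$ and $\partial_b=2b\,\partial_\gamma$. This expresses the entries of $M$ through the quantities $M^{(m)}_{mm}=w(\partial_m w)^2-\partial_{mm}w$, $M^{(\gamma)}_{\gamma\gamma}=w(\partial_\gamma w)^2-\partial_{\gamma\gamma}w$ and $M^{(m\gamma)}=w\,\partial_m w\,\partial_\gamma w-\partial_{m\gamma}w$ already controlled in Lemma~\ref{lemma:joint_convex}, plus additional \emph{first-order} terms produced by the curvature of the substitution maps: concretely $M_{aa}=m^4M^{(m)}_{mm}-2m^3\partial_m w$, $M_{bb}=4\gamma M^{(\gamma)}_{\gamma\gamma}-2\partial_\gamma w$ and $M_{ab}=-2bm^2M^{(m\gamma)}$. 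I would then verify positive semidefiniteness by checking $M_{aa}\ge0$, $M_{bb}\ge0$ and $\det M=M_{aa}M_{bb}-M_{ab}^2\ge0$.

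The main obstacle is precisely the two extra terms $-2m^3\partial_m w$ and $-2\partial_\gamma w$: because $w$ is increasing in both $m$ and $\gamma$ on the region of interest, one has $\partial_m w>0$ and $\partial_\gamma w>0$, so these terms are negative and the diagonal nonnegativity does \emph{not} follow from Lemma~\ref{lemma:joint_convex} alone. I would control them by showing that the positive curvature dominates, i.e.\ $m\,M^{(m)}_{mm}\ge2\partial_m w$ and $2\gamma M^{(\gamma)}_{\gamma\gamma}\ge\partial_\gamma w$, using the lower bound $w\ge Q^{-1}(\varepsilon_{\max})$ and $\gamma\ge\gamma_{\mathrm{th}}\ge1$ to bound the ratios of the derivatives of $w$. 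The determinant is the most delicate step: after substituting the entries, it again reduces to a sign condition on a collection of auxiliary functions of $(\gamma,r)$ that must be shown non-negative on the feasible set, in the same spirit as step~3 of the proof of Lemma~\ref{lemma:joint_convex}. Establishing these scalar inequalities, rather than the reduction itself, is where the real work lies.
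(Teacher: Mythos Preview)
Your proposal is correct and follows essentially the same route as the paper: reduce to a single user's $2\times 2$ Hessian, use the lower bound $w\ge Q^{-1}(\varepsilon_{\max})$ together with $\gamma\ge 1$ to absorb the negative first-order terms created by the substitution, and invoke the determinant inequality from Lemma~\ref{lemma:joint_convex} for the cross term. Your factorization $\nabla^2\varepsilon=|Q'(w)|\,M$ with $M=w\,\nabla w\,\nabla w^{\!\top}-\nabla^2 w$ is a tidier bookkeeping device than the paper's direct expansion, but the scalar inequalities you would ultimately verify (e.g.\ $mw\,\partial_m w\ge 2$ for $M_{aa}\ge 0$) are exactly the ones established in Appendix~B.
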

\vspace*{-10pt}
\begin{proof}
\vspace*{-10pt}
See Appendix B.
\end{proof}
\vspace{-5pt}
According to Lemma~\ref{lemma:joint_convex_reformulated}, problem~\eqref{problem_downlink_reformulated} is a convex problem and  can be solved efficiently by standard convex programming methods. %It is worth to mention that the problem can also be extended to multi-user scenario with $\mathbf{m}=\{m_i,\forall i=1,\dots,N\}$ and $\mathbf{p}=\{p_i,\forall i=1,\dots,N\}$ while the convex feature remains.
%Moreover, as one of the advantage for this characterization, it is able to provide the optimal solution with low complexity of $\mathcal{O}(4N^2)$ while guaranteeing the global optimum. which is also scalable to support low-latency IoT network with massive devices, since the complexity has only a quadratic growth along with increase of $N$. 

Although we consider the maximal error probability as our objective function, it should be pointed out that %Lemma~\ref{lemma:joint_convex_reformulated} and
Lemma~\ref{lemma:joint_convex_reformulated} can also be applied with other objectives. For instance, we can take total energy consumption, i.e., $\sum^N_{i=1}m_i p_i$,  or effective throughput, i.e., $\sum^N_{i=1} D_i(1-\varepsilon_i)$, as the objective function. However, the exact design of the metric for the resource allocation problem is out of the main scope of this paper.

{As a comparison, we also provide two other potential approaches that only leverage the individual convexity with respect to a single variable to solve the problem. Then, we discuss the advantage of the solutions based on joint convexity with respect to optimality, complexity and scalability.} %of those approaches.

\vspace*{-10pt}
\subsection{Approaches Based on Individual Convexity}
\subsubsection{Integer convex optimization}
{Recall that the blocklength ${m_i}$ is an integer variable and total available blocklength in the practical system is always limited. Therefore,} 
%Since the blocklength ${m_i}$ is an integer variable and the total available blocklength is limited, 
the original problem~\eqref{problem_downlink} can be decomposed into  $\frac{(M-1)!}{(M-N)!}$ sub-problems with fixed $\mathbf{m}=\mathbf{m}^o$ where $\mathbf{m}^o$ is an element in the set of possible blocklength allocation combinations $\mathcal{M}$, i.e.,
    \vspace*{-15pt}
    \begin{mini!}[2]
    {\mathbf{p}}{\max_i\{\varepsilon_{i}\}}
    {\label{problem_int}}{}
    \addConstraint{\mathbf{m}}{= \mathbf{m}^o}
    \addConstraint{\sum^N_im_ip_i}{\leq E}
    \label{con:decoposed_E}
    \addConstraint{\varepsilon_i}{\leq \varepsilon_{\max}, \gamma_i\geq \gamma_{\rm th}, \ \forall i \in \mathcal{N}.}\vspace{-15pt}
    \end{mini!}
    As proven in Lemma~\ref{lemma:joint_convex}, $\varepsilon_i$ is convex in $\mathbf{p}$ and the rest of the constraints are either affine or convex. %(note that Lemma~\ref{lemma:joint_convex_reformulated} is not necessary for such problem due to fixed $\mathbf{m}^o$). 
    Therefore, the original problem becomes essentially an integer convex optimization problem. %In other word, the global optimal solution can be found by comparing the solutions of each decomposed sub-problem with all possible $\mathbf{m}^o$ in $\mathcal{M}$, resulting a computational complexity $\mathcal{O}(\frac{(M-1)!}{(M-N)!}N^2)$. The major drawback of this approach is the complexity, which increases exponentially with the available blocklength and number of UEs. This is especially inpreferred for the future low-latency IoT network connected with possibly massive devices. 
\subsubsection{Iterative search}
Similar to fixing $\mathbf{m}$, we can also decompose the original problem by fixing $\mathbf{p}=\mathbf{p}^o$, i.e.,
    \vspace*{-10pt}
        \begin{mini!}[2]
    {\mathbf{m}}{\max_i\{\varepsilon_{i}\}}
    {\label{problem_int2}}{}
    \addConstraint{\mathbf{p}}{= \mathbf{p}^o}
    \addConstraint{\sum^N_{i=1}m_i}{\leq M}
    \addConstraint{\sum^N_im_ip_i}{\leq E}
    \addConstraint{\varepsilon_i}{\leq \varepsilon_{\max}, \gamma_i\geq \gamma_{\rm th},\ \forall i \in \mathcal{N}}.
    \end{mini!}
    Since both sub-problems~\eqref{problem_int} and~\eqref{problem_int2} are convex, we can also solve the resource allocation problem in an alternating fashion. In particular, for each iteration $k$, we fix $\mathbf{p}=\mathbf{p}^{(k-1)}$ and obtain $\mathbf{m}^{(k)}$ by solving Problem~\ref{problem_int2} optimally. Then, we fix $\mathbf{m}=\mathbf{m}^{(k)}$ and obtain $\mathbf{p}^{(k)}$ by solving Problem~\ref{problem_int} optimally. We set $\mathbf{p}^{(0)}$ as the initial value for the first iteration. We repeat the above iterative procedure until convergence, and the convergence rate  is linear with computational complexity $\mathcal{O}(2N^2)$~\cite{Tseng_BCD}. % The advantage of iterative searching is the low complexity comparing to the integer convex programming. However, the obtained solution of such method is only sub-optimal and highly depends on the initial value.  To improve the performance, one can  redo the algorithm multiple times with randomly choosing the initial values. However,  it also increases the overall complexity and still cannot guarantee the global optimum. Furthermore, the cost also increases when the number of user $N$ increases due to the larger random space for the initial value.
{\vspace{-5pt}
\subsection{Comparison}
Although all three aforementioned approaches are able to provide (sub-)optimal solutions for the considered joint resource allocation problem, they are quite % complexity and  performance are not the same. In this subsection, we discuss them 
different regarding their optimality, computational complexity and scalability:
\begin{itemize}
    \item {\textbf{Joint convex optimization}: One of the major advantages of this algorithm is its ability to provide an optimal solution with the low computational complexity of $\mathcal{O}(4N^2)$ while guaranteeing global optimality. Moreover, this approach is also scalable to support massive device access in low-latency IoT networks, since the complexity only has a second-order polynomial growth rate in $N$. }
    \item \textbf{Integer convex optimization}: The globally optimal solution can be found by comparing the solutions of each decomposed sub-problem with all possible $\mathbf{m}^o$ in $\mathcal{M}$, resulting in a computational complexity of $\mathcal{O}\left(\frac{(M-1)!}{(M-N)!}N^2\right)$, which is the major drawback of this approach. Specifically, the complexity increases exponentially with the available blocklength and number of UEs. This is especially prohibitive when then number of connected UEs is high.
    \item \textbf{Iterative   search}: Recall that the complexity of iterative search is $\mathcal{O}(2N^2)$. Therefore, the advantage of iterative search is the low complexity compared to the integer convex programming. However, the solution obtained with this method is only sub-optimal and highly depends on the initialization.  To improve the performance, one can  run the algorithm multiple times with randomly chosen initial values. However, this increases the overall complexity and still cannot guarantee global optimality. Furthermore, the cost also grows further when the number of users, $N$, increases due to the larger space from which the initial value needs to be selected.
\end{itemize}

%    {\dots. give a sub-section\dots to officially compare all algorithms\dots make an item?\dots}
    
 %     {Discuss a bit more about the performance advantages     in terms of   optimum, complexity and scalability of applying the characterized joint convexity to handle the joint resource allocation problem. }
To elucidate the comparison}, we list the properties of all three methods in terms of optimality, complexity and scalability in Table~\ref{tab:compare}.

\begin{table*}[!t]
\centering
\vspace{-15pt}
 \begin{tabular}{||c| c| c| c|c||} 
 \hline
& Convexity & Optimality & Complexity & Scalability \\ 
 \hline
 \cellcolor{green!25} Joint convex optimization &\cellcolor{green!25} joint &\cellcolor{green!25}  globally optimal &\cellcolor{green!25}  low &\cellcolor{green!25}  yes \\ 
 \hline
 Integer convex optimization & partial &\cellcolor{green!25} globally optimal & high & no \\ 
 \hline
 Iterative search & partial & sub-optimal & \cellcolor{green!25} low & conditional \\

 \hline
\end{tabular}
\caption{{ Properties of three optimization approaches. Green highlighting indicates the best performance among the three candidates. The joint convex optimization shows significant advantages compared with other two common approaches in both theoretical perspectives, i.e., convexity and optimality, as well as practical perspectives, i.e., complexity and scalability. In particular, the joint convex optimization approach can optimally solve Problem (10) due to the joint convexity proven by Lemma 1 and 2. Therefore, the complexity is as low as $\mathcal{O}(4N^2)$. It implies that the joint convex optimization is scalable, which is critical to support massive connectivity. }}
%\vspace*{-5pt}
\label{tab:compare}
\end{table*}
\vspace{-5pt}
\section{Extensions to Fading Channels and Relay Networks} 
{{In this section, we further extend and apply our   analytical results to more specific scenarios in which the formulated problems are more complicated and precluding the direct use of our analytical findings in previous sections.} 
In particular, we first investigate the joint convexity in a fading channel instead of the static channel. In this case, the error probability expression involves the fading distribution, and differs from the normal approximation in~\eqref{eq:error_tau}. Next, we consider a two-hop relaying system, where the overall error is influenced by the error of two links. Therefore, the overall error probability consists of the product of two dependent error probabilities, which increases the difficulty in establishing the convexity of the error probability. We demonstrate that our analytical results can also facilitate the analysis of joint convexity for both extensions.} 
%Although the joint convexity is deviated with certain assumptions, i.e., we have perfect CSI knowledge and the error probabilities among users are independent to each others, we can still extend the applicability of our analytical results into other practical scenarios. In this section, we present two key extensions, namely, the scenario with fading channel and the scenario with relaying system.
\vspace{-10pt}
\subsection{Fading Channel}

In the previous section, joint convexity is proven under the assumption of a static channel with perfect CSI knowledge. %{ does it still mean the channel is AWGN?} 
{Now, let us consider a general quasi-static fading channel, the distribution of which follows central limit theorem, e.g., Rayleigh fading or Nakagami fading. 
%a more realistic quasi-static Rayleigh fading channel.
In particular, the channel gain $z$ (including the path-loss) is constant during each transmission with coding blocklength $m$ but varies from one transmission to the next.} Then, the expression of error probability in~\eqref{eq:error_tau} is reformulated as \vspace{-5pt}
\begin{equation}
\label{eq:error_fading}
\mathbb{E}_{z}[\varepsilon]=\int^\infty_0\varepsilon f_Z(z)d z
\vspace{-5pt}
\end{equation}
where $f_Z(\cdot)$ denotes the probability density function (PDF) of the fading. 

{First, we still assume that perfect CSI is available before the considered transmission frame. Therefore, we are able to adjust the blocklength and power in each transmission.} Then, the error probability for any single transmission with known $z$ can still be evaluated by~\eqref{eq:error_tau}, unless the channel gain is lower than the threshold $z_{\rm th}$ such that Shannon capacity is lower than the transmission rate even with maximal available blocklength $M_{\max}$ and maximal transmit power $P_{\max}$, i.e., $\mathcal{C}(P_{\max} \mid z\leq z_{\rm th})<\frac{D}{M_{\max}}$. We further assume that the packets are dropped in transmissions with those parameters~\cite{She_V_1_app2}, i.e., $\varepsilon(m,p \mid z <z_{\rm th})=1$. Then, the (expected) error probability over the fading channel is given by
\begin{equation}
    \label{eq:error_fading_perfect}
    \mathbb{E}_{z}[\varepsilon]=\int^{z_{\rm th}}_0\varepsilon(m,p|z)f_Z(z)d z+\int^\infty_{z_{\rm th}}\varepsilon(m,p|z)f_Z(z)d z.
\end{equation}
{Note that the fading is quasi-static. Then, let $z(\phi)$ denote an arbitrary channel state, where $\phi=\{1,\dots,\Phi\}$. Now,  ~\eqref{eq:error_fading_perfect} can be further approximated as
\begin{equation}
\label{eq:error_fading_perfect_quan}
\mathbb{E}_{z}[\varepsilon]\approx\sum^{\Phi}_{\phi=1}\varepsilon\left(m(\phi),p(\phi)|z=z(\phi)\right)f_Z(z(\phi))\Delta \phi,
\end{equation}
where $\Delta \phi=1/\Phi$ is the channel quantization level.
As $\Phi\to \infty$, i.e., $\Delta \phi\to 0$, the approximation becomes accurate.} 
Then, we have following corollary to characterize the joint convexity:\vspace{-5pt}
\begin{theorem1}
The error probability over a quasi-static fading channel as given by $\mathbb{E}_{z}[\varepsilon]$ in~\eqref{eq:error_fading_perfect_quan} is also jointly convex in the blocklength and transmit power in each slot, i.e., as a function of $\mathbf{m}=\{m(1),\dots,m(\Phi)\}$ and  $\mathbf{p}=\{p(1),\dots,p(\Phi)\}$. \vspace{-5pt}
\end{theorem1}
\begin{proof}
Note that $\mathbb{E}_{z}[\varepsilon]$ is a weighted sum of the error probabilities under each state $z(\phi)$. In the case of $z(\phi)\leq z_{\rm th}$, we have\vspace{-5pt}
\begin{equation}
    \frac{\partial^2 \varepsilon(m(\phi),p(\phi)|z=z(\phi))f_Z(z(\phi))}{\partial (\mathbf{m},\mathbf{p})^2}=0.
    \vspace{-5pt}
\end{equation}

Moreover, in the case of $z(\phi)> z_{\rm th}$, according to Lemma 1, we have
\begin{equation}
     \frac{\partial^2 \varepsilon(m(\phi),p(\phi)|z=z(\phi))f_Z(z(\phi))}{\partial (\mathbf{m},\mathbf{p})^2}=     f_Z(z(\phi))\frac{\partial^2 \varepsilon(m(\phi),p(\phi)|z=z(\phi))}{\partial (m(\phi),p(\phi))^2}\geq 0.\vspace{-5pt}
\end{equation}
Hence, as a sum of convex functions, $\mathbb{E}_{z}[\varepsilon]$ is also convex.
\end{proof}

{However, CSI may not always available or the cost of gaining perfect CSI may be too high. In view of this, we consider a fading scenario where we only have the average CSI. Therefore, we are unable to adjust blocklength and transmit power in each transmission. In such scenarios, the instantaneous error probability may higher than 0.5 if the channel is sufficiently weak, i.e., coding rate is greater than Shannon capacity. To address this issue, following our previous work~\cite{Hu_2016_fading},  we denote $\bar z$ the medium of channel gain and $\bar z_{\rm th}$ the threshold so that the coding rate is just lower than the instantaneous Shannon capacity. For any reliable transmission with $\varepsilon\leq \varepsilon_{\max}$, i.e., $0 \leq \bar{z}_{\rm th} \leq \bar z$,  we have following corollary to characterize the convexity:}
\begin{theorem1}
\label{corollary:Fading_convex}
In a reliable transmission, the convexity of $\mathbb{E}[\varepsilon]$ with respect to blocklength $m$ or transmit power $p$ is still valid. 
\end{theorem1}
\begin{proof}
{Let $x$ presents blocklength $m$ or transmit power $p$. For the second derivative of~\eqref{eq:error_fading}, we have 
\begin{equation}
\label{eq:d2e_dx2}
	\frac{\partial ^2\mathbb{E}[\varepsilon]}{\partial x^2} 
	={\int^{{\bar{z}_{\rm th}}}_0
        \frac{\partial^2\varepsilon}{\partial x^2}f_Z(z)
        dz}
        +\underbrace{{\int^{{\bar{z}}}_{{\bar{z}_{\rm th}}}
            \frac{\partial^2\varepsilon}{\partial x^2}f_Z(z)
        dz}}_{\geq0}
        +\underbrace{\int^{\mathbf{\infty}}_{\bar{z}}
            \frac{\partial^2\varepsilon}{\partial x^2}f_Z(z)
        dz}_{\geq 0}.
\end{equation}
The last two terms are non-negative since the integral is the sum of the second derivative of error probability over $f_Z(z)$, which has been proven as convex in Lemma 1. However, the sign of the first term is still nondeterministic. }

{Note that $\bar{z}$ is the median of the channel gain. The cumulative distribution function (CDF) follows
\begin{equation}
\begin{split}
\label{eq:ineq_1}
    \int^{\infty}_{\bar{z}_{\rm th}}f_Z(z)dz&\geq \int^{\infty}_{\bar{z}}f_Z(z)dz=\int^{\bar{z}}_0f_Z(z)dz=\frac{1}{2}\geq \int^{\bar{z}_{\rm th}}_0f_Z(z)dz.
\end{split}
\end{equation}
It also holds that $\bar{z}_{\rm th}\leq \bar{z}$. Hence, we have the following inequality:
\begin{equation}
\label{eq:ineq_2}
    \int^{\mathbf{\infty}}_{{\bar{z}_{\rm th}}}
\frac{\partial^2\varepsilon}{\partial x^2}dz\geq \left|\int^{{\bar{z}_{\rm th}}}_{{0}}
\frac{\partial^2\varepsilon}{\partial x^2}dz \right|.
\end{equation}
with multiplication of both sides in~\eqref{eq:ineq_1} and~\eqref{eq:ineq_2}, the following inequality holds:
\begin{equation}
\label{eq:ineq_3}
    \int^{\mathbf{\infty}}_{{\bar{z}}}
\det \mathbf{H}(\varepsilon)f_Z(z)dz\geq \left|\int^{{\bar{z}_{\rm th}}}_{{0}}
\det \mathbf{H}(\varepsilon)f_Z(z)dz \right|.
\end{equation}
It implies that~\eqref{eq:d2e_dx2} is non-negative. as a result, the convexity of $\mathbb{E}[\varepsilon]$ with respect to $m$ or $p$ holds.} 
\end{proof}
{Based on above discussion, we can conclude that the convexity still holds for both cases with fading channel. However, they are handled differently in the optimization according to the scenarios. On one hand, if we have perfect CSI, we should optimize the blocklength and transmit power in each slot (or each $N$ slots, if we have $N$ users). On the other hand, if we only have average CSI, the best we can do is to target at the long-term error probability and optimize the blocklength and transmit power with iterative search. However, it only has to be done once for every slot. Nevertheless, all analytical results in previous sections are also applicable for quasi-static fading channel.} 
\vspace*{-10pt}
\subsection{Cooperative Relaying}
Cooperative relaying is one of the most efficient methods to mitigate fading by exploiting the spatial diversity and providing better channel quality. Thus, it is of interest to %cooperative relaying is widely used to enhance the network performance in various ways. Therefore, we aim to 
extend our results also to the typical two-hop relaying system. In particular, consider a relaying system, where the source (considered as node 1) transmits a data packet of $D$ bits through a relay node (considered as node 2) to the destination. Assuming that the relay employs the decode-and-forward protocol, the channel in each hop experiences block fading, and perfect CSI is available, the error probability in each hop is given by\vspace{-5pt}
\begin{equation}
    \varepsilon_i=
{\textstyle {\mathcal{P}}(p_i,\frac{D}{m_i},m_i){,}}\vspace{-10pt}
\end{equation}
where $p_i$ is the normalized transmit power of hop $i$ and $m_i$ is the blocklength. Since the packet is transmitted forward through all nodes, an error occurs if the transmission in any hop fails, i.e., the (approximated) overall error probability can be written as\vspace{-5pt}
\begin{equation}
    \varepsilon_{\rm O}=\varepsilon_1+(1-\varepsilon_1)\varepsilon_2=\varepsilon_1+\varepsilon_2-\varepsilon_1\varepsilon_2.\vspace{-10pt}
\end{equation}
% \begin{equation}
%     \varepsilon_{\rm O}=\sum^I_{i=1}\varepsilon_i\prod^{i-1}_{j=0}(1-\varepsilon_j).
% \end{equation}
%For  the  convenience  of  notation, we define $\varepsilon_0=0$. 

We aim at minimizing the  overall error probability by jointly optimizing the blocklength and the transmit power for each hop while ensuring that the error probability of each hop is below the threshold $\varepsilon_{\max}$. Then, the optimization problem is given by\vspace*{-10pt}
\begin{mini!}[2]
{\mathbf{m},\mathbf{p}}{\varepsilon_{\rm O}}
{\label{problem_relay_series}}{}
\addConstraint{\eqref{con:total_M}}{- \eqref{con:energy_nonconvex}}\nonumber
\vspace*{-10pt}
\end{mini!}
Hence, the above optimization problem has essentially the same form of Problem~\eqref{problem_downlink} with the different objective function $\varepsilon_{\rm O}$.
{ However, Lemma~\ref{lemma:joint_convex} cannot be directly applied due to the second order term $\varepsilon_1\varepsilon_2$ in the objective function. In fact, in most existing works that study relay system with FBL codes~\cite{Pan_relay, ourJSAC, Agarwal_relay}, the term $\varepsilon_1\varepsilon_2$ is ignored. This may be inaccurate for the practical system, e.g., one of the hop has weak channel gain in a fading channel scenario. Therefore, we provide the following lemma to characterize the joint convexity of the overall error probability without ignoring $\varepsilon_1\varepsilon_2$. } \vspace*{-5pt}

\begin{theorem}
\label{lemma:relay_convex}
    The overall error probability $\varepsilon_{\rm O}$ is jointly convex in $\mathbf{m}$ and $\mathbf{p}$ within the feasible set defined as $\varepsilon_i\leq\varepsilon_{\max}$ and $\gamma_i\geq \gamma_{\rm th}$, $\forall i\in\mathcal{N}$.\vspace*{-5pt}

\end{theorem}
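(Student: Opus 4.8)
The plan is to exploit the fact that $\varepsilon_1$ depends only on $(m_1,p_1)$ while $\varepsilon_2$ depends only on $(m_2,p_2)$, so that the Hessian of $\varepsilon_{\rm O}=\varepsilon_1+\varepsilon_2-\varepsilon_1\varepsilon_2=1-(1-\varepsilon_1)(1-\varepsilon_2)$ over the four variables $(m_1,p_1,m_2,p_2)$ takes a $2\times2$ block form. Writing $\mathbf{H}_i$ for the $2\times2$ Hessian of $\varepsilon_i$ in $(m_i,p_i)$ and $\nabla\varepsilon_i$ for its gradient, a direct differentiation gives
\begin{equation}
\mathbf{H}(\varepsilon_{\rm O})=\begin{pmatrix}(1-\varepsilon_2)\mathbf{H}_1 & -\nabla\varepsilon_1\nabla\varepsilon_2^{\top}\\[2pt] -\nabla\varepsilon_2\nabla\varepsilon_1^{\top} & (1-\varepsilon_1)\mathbf{H}_2\end{pmatrix},
\end{equation}
so the task reduces to proving that this $4\times4$ matrix is positive semidefinite throughout the feasible set.

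First I would dispose of the diagonal blocks. Since every transmission is constrained to $\varepsilon_i\leq\varepsilon_{\max}=0.1$ we have $1-\varepsilon_i\geq0.9>0$, and Lemma~\ref{lemma:joint_convex} already guarantees $\mathbf{H}_i\succeq0$ on the region $\gamma_i\geq1$, $\mathcal C(\gamma_i)\geq r_i$; hence both diagonal blocks are positive semidefinite. The whole difficulty lies in the indefinite off-diagonal (cross-hop) block $-\nabla\varepsilon_1\nabla\varepsilon_2^{\top}$, which is precisely the contribution that prior works discard when they ignore $\varepsilon_1\varepsilon_2$. To control it, I would apply a Schur-complement argument, using that the analysis in Appendix~A in fact yields $\mathbf{H}_i\succ0$ in the interior of the region. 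This reduces positive semidefiniteness to the single scalar inequality $s_1 s_2\leq(1-\varepsilon_1)(1-\varepsilon_2)$, where $s_i:=\nabla\varepsilon_i^{\top}\mathbf{H}_i^{-1}\nabla\varepsilon_i\geq0$ is a per-hop Newton-decrement-type quantity; a convenient decoupled sufficient condition is the per-hop bound $s_i\leq1-\varepsilon_i$ for $i=1,2$.

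The main obstacle is establishing this per-hop inequality, which is a genuinely quantitative statement: the product of two merely concave success probabilities $1-\varepsilon_i$ of disjoint variables need not be concave, so convexity of each $\varepsilon_i$ alone (Lemma~\ref{lemma:joint_convex}) is insufficient. I would prove it by reusing the representation $\varepsilon_i=Q(w_i)$ from Appendix~A, writing $\nabla\varepsilon_i=Q'(w_i)\nabla w_i$ and $\mathbf{H}_i=Q'(w_i)\bigl(\mathbf{H}_{w_i}-w_i\nabla w_i\nabla w_i^{\top}\bigr)$, so that $s_i$ becomes an explicit expression in the first and second derivatives of $w_i$ together with $w_i=Q^{-1}(\varepsilon_i)$. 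Bounding $s_i$ by $1-\varepsilon_i$ then follows from the same region constraints used in Lemma~\ref{lemma:joint_convex}, namely $\gamma_i\geq1$, $\mathcal C(\gamma_i)\geq r_i$ and $\varepsilon_i\leq\varepsilon_{\max}$, the last of which keeps $w_i$ bounded away from zero so that $Q(w_i)$ sits in its convex, rapidly decaying tail. Once the per-hop bound is in hand, $s_1 s_2\leq(1-\varepsilon_1)(1-\varepsilon_2)$ is immediate, the cross term is dominated, and joint convexity of $\varepsilon_{\rm O}$ in $(\mathbf{m},\mathbf{p})$ follows on the stated feasible set $\varepsilon_i\leq\varepsilon_{\max}$, $\gamma_i\geq\gamma_{\rm th}\geq1$.
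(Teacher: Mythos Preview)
Your block-Hessian/Schur-complement reduction to the scalar condition $s_1 s_2 \leq (1-\varepsilon_1)(1-\varepsilon_2)$ is correct, but the decisive per-hop bound $s_i\leq 1-\varepsilon_i$ is only asserted, not established. If you actually carry out the computation you begin, with $\nabla\varepsilon_i=Q'(w_i)\nabla w_i$ and $\mathbf{H}_i=Q'(w_i)\bigl(\mathbf{H}_{w_i}-w_i\nabla w_i\nabla w_i^{\top}\bigr)$, a Sherman--Morrison step yields the uniform bound $s_i\leq \phi(w_i)/w_i$ with $\phi(w)=\tfrac{1}{\sqrt{2\pi}}e^{-w^2/2}$, regardless of how strongly concave $w_i$ is; since $w_i\geq Q^{-1}(\varepsilon_{\max})\approx1.28$ this is at most about $0.14<0.9\leq1-\varepsilon_i$, so your plan can be closed.

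The paper takes a shorter and structurally different route: it first shows that $\varepsilon_{\rm O}=Q(w_1)+Q(w_2)-Q(w_1)Q(w_2)$ is jointly convex and non-increasing in $(w_1,w_2)$ via a direct $2\times2$ Hessian in Q-function calculus alone, and then invokes the standard composition rule (convex, componentwise non-increasing outer function composed with concave inner maps) together with the joint concavity of each $w_i$ in $(m_i,p_i)$ already established in Appendix~A. Interestingly, the paper's $2\times2$ determinant condition, $w_1 w_2(1-\varepsilon_{\rm O})\geq \tfrac{1}{2\pi}e^{-(w_1^2+w_2^2)/2}$, is exactly the limiting case of your Schur inequality when the concavity of the $w_i$ is slack ($s_i\to\phi(w_i)/w_i$). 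The paper's approach thus avoids inverting $\mathbf{H}_i$, the strict-positivity hypothesis, and the Schur machinery; your route is heavier but could in principle certify convexity on a marginally larger region, since finite concavity of $w_i$ makes $s_i$ strictly smaller than $\phi(w_i)/w_i$.
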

\begin{proof}
%For $N=2$, we have:
%\begin{equation}
%    \varepsilon_{\rm O}=\varepsilon_1+\varepsilon_2-\varepsilon_1 \varepsilon_2   
%\end{equation}
We first define $w_i=\sqrt{\frac{m}{V(\gamma_i)}}\big(\mathcal{C}(\gamma_i)-\frac{D}{m}\big)\ln{2}$, with which the overall error probability can be expressed as $\varepsilon_{\rm O}=Q(w_1)+Q(w_2)-Q(w_1)Q(w_2)$. Note that $\varepsilon_1=Q(w_1)$ and $\varepsilon_2=Q(w_2)$. 
Then, we first show that the overall error probability $\varepsilon_{\rm O}$ is jointly convex in $w_1$ and $w_2$. 
The Hessian matrix of $\varepsilon_{\rm O}$ with respect to $w_1$ and $w_2$ is given by\vspace*{-5pt}
\begin{equation}
\mathbf{\hat H}=\left(
\begin{array}{cc} \frac{\partial^2{\varepsilon_{\rm O}}}{\partial{w_1}^2}&\frac{\partial^2{\varepsilon_{\rm O}}}{\partial{w_1}\partial{w_2}} \\ 
\frac{\partial^2{\varepsilon_{\rm O}}}{\partial{w_2}\partial{w_1}}&\frac{\partial^2{\varepsilon_{\rm O}}}{\partial{w_2}^2}\\ 
\end{array}
\right).\vspace*{-5pt}
\end{equation}
Clearly, the upper-left element in matrix $\mathbf{\hat H}$ can be expressed as
\begin{align}
    \frac{\partial^2 \varepsilon_{\rm O}}{\partial w_1^2}=(1-Q(w_2))\frac{w_1}{\sqrt{2\pi}}e^{-\frac{w_1^2}{2}}\geq 0.
\end{align}
The above inequality holds due to the fact that $\varepsilon_2=Q(w_2)\ll 1$ and $w_1\geq 0$, which follows from $\varepsilon_1=Q(w_1)\ll 1$. Similarly, we have for the lower-right element in $\mathbf{\hat H}$,\vspace*{-5pt}
\begin{align}
    \frac{\partial^2 \varepsilon_{\rm O}}{\partial w_2^2}=(1-Q(w_1))\frac{w_2}{\sqrt{2\pi}}e^{-\frac{w_2^2}{2}}\geq 0.\vspace*{-5pt}
\end{align}

With the remaining elements formulated as\vspace*{-5pt}
\begin{equation}
    \frac{\partial^2{\varepsilon_{\rm O}}}{\partial{w_1}\partial{w_2}}=
\frac{\partial^2{\varepsilon_{\rm O}}}{\partial{w_2}\partial{w_1}}=-\frac{1}{2\pi}e^{-\frac{w_1^2}{2}-\frac{w_2^2}{2}},\vspace*{-5pt}
\end{equation}
we can obtain the determinant of matrix $\mathbf{\hat H}$ as
\vspace*{-10pt}
\begin{align}
    \det(\mathbf{\hat H})=&(1-Q(w_1))(1-Q(w_2))\frac{w_1w_2}{2\pi}e^{-\frac{w_1^2}{2}-\frac{w_2^2}{2}}%\nonumber \\
    %&
    -\frac{1}{4\pi^2}e^{-w_1^2-w_2^2}\nonumber\\
    =&\frac{1}{2\pi}e^{-\frac{w_1^2}{2}-\frac{w_2^2}{2}}\left(w_1w_2(1-\varepsilon_{\rm O})-\frac{1}{2\pi}e^{-\frac{w_1^2}{2}-\frac{w_2^2}{2}}\right)\nonumber\\
    \geq&\frac{1}{2\pi}e^{-\frac{w_1^2}{2}-\frac{w_2^2}{2}}\left(1.2^2(1-\varepsilon_{\rm O})-\frac{1}{2\pi}e^{-1.2^2}\right)%\nonumber\\
    %\geq&
    \geq 0,
\vspace*{-10pt}
\end{align}
where the inequality holds due to the fact that $\varepsilon_i\leq \varepsilon_{\max}\ll 1$ and $\omega_1,\omega_2\geq 1.2$. Moreover, $e^{-x^2}$ is a monotonically decreasing function with respect to $x$. 
Therefore, we have proven that the overall error probability $\varepsilon_{\rm O}$ is jointly convex in $w_1$ and $w_2$. 

Note that when $w_1$ (or $w_2$) becomes larger,  we will obtain a lower error probability $\varepsilon_1$ (or $\varepsilon_2$), and the overall error probability $\varepsilon_{\rm O}$ will also be lower. In other words, the overall error probability $\varepsilon_{\rm O}$ is a decreasing function of both $w_1$ and $w_2$. On the other hand, according to Lemma~\ref{lemma:joint_convex}, it has been proven that $w_i$ is jointly concave in blocklength $m_i$ and SNR $\gamma_i$, i.e., jointly concave in $m_i$ and $p_i$. Therefore, we can conclude that the overall error probability is jointly convex in blocklength $m_i$ and power $p_i$, i.e., jointly convex in $\mathbf{m}$ and $\mathbf{p}$. 
%Similarly, we can also prove that $\frac{\partial^2 \varepsilon_2}{\partial \mathbf{x}^2}\left(1-\varepsilon_1 \right)- \left(\frac{\partial \varepsilon_2}{\partial \mathbf{x}}\right)^2\geq 0$. As a results, it holds $\frac{\partial^2 \varepsilon_{\rm O}}{\partial \mathbf{x}^2}\geq 0$, i.e., $\varepsilon_{\rm O}$ is convex in $\mathbf{x}$. Since both $\mathbf{m}$ and $\mathbf{p}$ can be considered as $\mathbf{x}$, $\varepsilon_{\rm O}$ are convex in both $\mathbf{m}$ and $\mathbf{p}$. By following~\eqref{eq:hessian}, we can show that the determinate of hessian matrix is non-negative. Hence, $\varepsilon_{\rm O}$ is jointly convex in  $\mathbf{m}$ and $\mathbf{p}$.
\end{proof}

With Lemma~\ref{lemma:relay_convex}, Problem~\eqref{problem_relay_series} can also be reformulated by replacing $m_i=\frac{1}{b_i}$ and $p_i=a_i^2$ in a similar way as in Problem~\eqref{problem_downlink_reformulated} in Section~\ref{sec:use_case}. Then, it can be easily solved as a convex problem. { It is worth to mention that the joint convexity we characterized in Lemma 3 can also be applied to other scenarios, where the multiplication of error probabilities involved, such as the error probability in NOMA schemes and retransmission schemes. }
\vspace*{-10pt}
\section{Numerical Results}
In this section, we further validate our analytical characterizations via numerical results. We also show the advantage of leveraging the joint convexity compared to traditional approaches in the existing works presented as benchmarks. Unless specifically mentioned otherwise, the setup of the simulations is as follows: { We set the total available blocklength as $M=800$ [chn.use]. The normalized total energy budget as $E=2400$ [W$\cdot$ chn.use]. The packet size is set as $D_i=480$ [bits]. In such case, the bandwidth and transmission time interval is normalized to the unit value. Moreover, The path-loss is normalized to $1$ and the channels are set to experience block-fading and i.i.d. with $h_n\sim \mathcal{N}(0,1)$ while the noise spectral density $\sigma^2=0.01$. The error probability threshold is set as $\varepsilon_{\max}=0.1$.
\begin{table}[!h]
\centering
\caption{parameter setups in the numerical simulations}
\begin{tabular}{lll}
\hline
Parameter & Notation & Value\\ \hline \hline
    Total available blocklength      & $M$         & 800 [chn.use] \\ \hline
    (Normalized) total energy budget      &     $E$     & 2400 [W$\cdot$chn.use] \\ \hline
    Channel of user $i$     &      $h_i$    & $\sim\mathcal{N}(0,1)$ \\ \hline
    Noise spectral density & $\sigma^2$ & $0.01$ \\ \hline
    Error probability threshold      &      $\varepsilon_{\max}$    & 0.1 \\ \hline
    SNR threshold      &      $\gamma_{\text{th}}$    & 1 \\ \hline
\end{tabular}
\end{table}
}

We first illustrate the feasible set according to Lemma~\ref{lemma:joint_convex} in Fig.~\ref{fig:feasible_set_single}. We plot the heat map with respect to  transmission rate $r$ and SNR $\gamma$. The color indicates the value of the decoding error probability at that point, while the regimes without color or with transparent color are infeasible. We also plot the zoomed-in view of the boundary of condition~\eqref{eq:condition_1}. As discussed in previous sections, the area outside of the boundary is tiny and is not of interest considering the practical scenarios due to the extremely low transmission rates $r<0.04$. On the other hand, the upper white area is the regime where $\mathcal{C}\leq r$, which also belongs to the non-convex set as we discussed in Section II, where the Q-function is always greater 0.5, i.e., $Q(x)\geq 0.5,\ \forall x\leq 0$. Furthermore, recall that we consider the reliable transmission with a reasonable SNR threshold $\gamma\geq 1$. Therefore, the feasible regime is suitable for most practical application scenarios.
\begin{figure}[!t]
    \centering
\includegraphics[width=0.57\textwidth,trim=0 10 0 0]{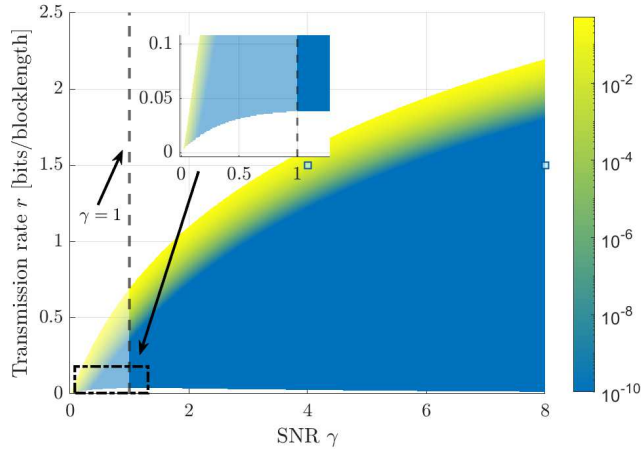}
\caption{Feasible set of the joint convexity for error probability in FBL regime. The color represents the value of $\varepsilon(\gamma,D,m)$. The zoom-in figure shows the boundary of condition~\eqref{eq:condition_1}.}
\label{fig:feasible_set_single}
\end{figure}

Next, we investigate the impact of transmit power $p$ and blocklength $m$ by showing the error probability against one of the variables while varying another, respectively. To present the influence of the channel model, we also plot $\varepsilon$ for both static channel and (slow) fading channel in Fig.~\ref{fig:convexity_single}. Specifically, we analyze the error probability in~\eqref{eq:error_tau} in terms of instantaneous SNR in the static channels and error probability in~\eqref{eq:error_fading} in terms of average SNR in fading channels. Overall, we can observe that the curves are initially concave then convex in both $p$ and $m$, taking the same shape as the Q-function regardless of whether static or fading channels are considered. This observation confirms our analytical results in Corollary~2.%\ref{corollary:Fading_convex}. Moreover, increasing both $p$ and $m$ will also decrease the error probability $\varepsilon$. However, the descent of $\varepsilon$ w.r.t. one variable highly influences by another. For instance, for $\gamma=3$, to achieve an error probability of 0.01 only requires 160 blocklengths, while  it needs more than 260 blocklengths for $\gamma=1$. It shows the importance of joint optimization design with both blocklength  and transmit power allocation. It is also worth to mention that the impact of fading channel to $\varepsilon$ varies based on the different regimes. However, in the regime of interest, i.e., $\varepsilon\leq 0.5$, it brings significantly negative impact to the error probability, which should not be overlooked when we implement our design from scenario with static channels to the scenario with fading channels. 

%\dots{( something wrong with Fig.~\ref{fig:convexity_single}? why the convexity does not change when $\varepsilon>0.5$ for fading channel with average CSI?)} 

\begin{figure}[!t]
    \centering
\includegraphics[width=0.57\textwidth,trim=0 10 0 0]{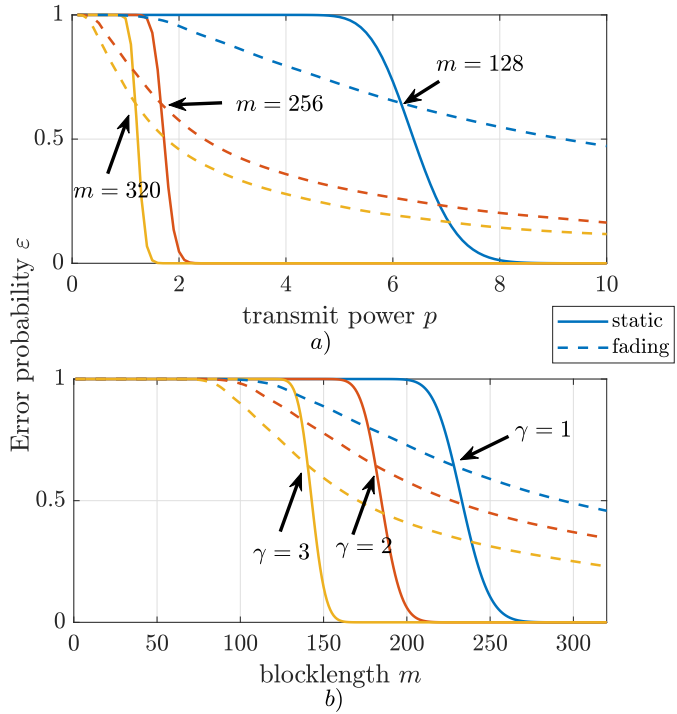}
\caption{Decoding error probability $\varepsilon$ against $a)$ transmit power $p$ ; $b)$ blocklength $m$ in various setups. Both static channel with instantaneous SNR (solid lines) and fading channel with average SNR (dash lines) are plotted.}
\label{fig:convexity_single}
\end{figure}

However, above figures only investigate the convexity of the error probability with a single user. To further provide insight on the joint convexity in practical scenarios with two users, we plot in Fig.~\ref{fig:feasible_set} the maximal error probability $\max_i \varepsilon_i$ against blocklength $m_1$ and transmit power $p_1$ of user 1, as well as the new set of variables $a_1$ and $b_1$. As a comparison, both the 3D plot of $\max_i \varepsilon_i$ and the corresponding heat map are provided. It should be pointed out that the visible area of the figure represents the feasible set of Problem~\eqref{problem_downlink}. As discussed in Section IV, $\varepsilon_i$ is jointly convex in $p_i$ and $m_i$, which is confirmed by the 3D plot in Fig.~\ref{fig:2}. Therefore, the constraint $\varepsilon_i\leq \varepsilon_{\max}$ is also convex, with which the feasible set is bounded by the bottom line in Fig.~\ref{fig:1}. However, the energy constraint $m_1p_1+m_2p_2\leq E$ is non-convex due to the multiplication of two variables, which results in a non-convex boundary depicted by the upper line in Fig.~\ref{fig:1}. To tackle this issue, we let $a_i=\frac{1}{m_i}$ and $b_i=\sqrt{p_i}$. Then, the energy constraint becomes convex, while the convexity of $\varepsilon_i$ is maintained. As a result, we have a jointly convex objective function (plotted in Fig.~\ref{fig:4}) with a convex feasible set (shown as the intersection of two boundaries in Fig.~\ref{fig:3}). These phenomena confirm our analytical results in Lemma~\ref{lemma:joint_convex_reformulated}. 

\begin{figure}[!t]
    \centering % <-- added
\begin{subfigure}{0.3\textwidth}
  \includegraphics[width=\linewidth]{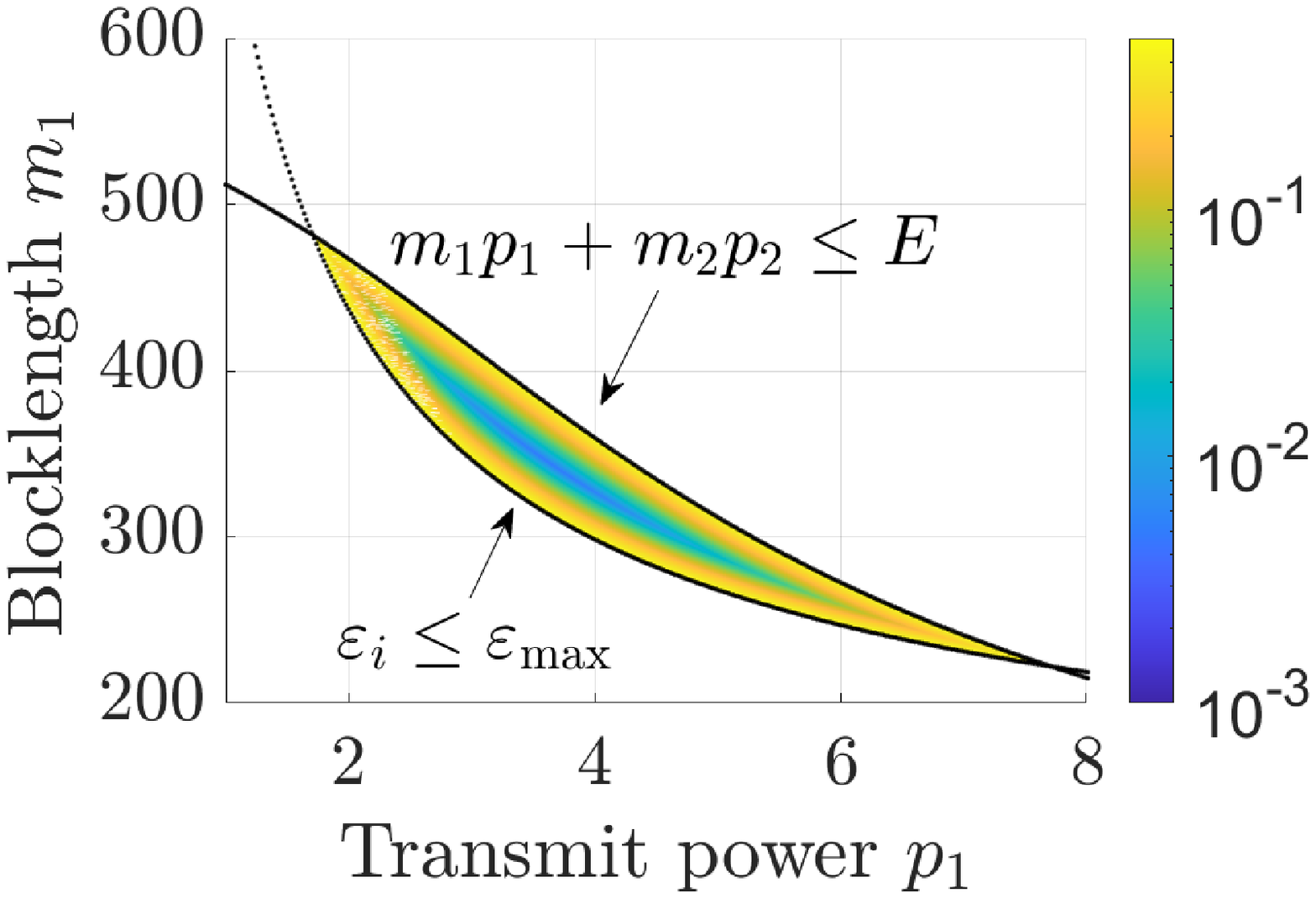}
  \caption{heat map of error probability with variable $m_1$ and $p_1$.}
  \label{fig:1}
\end{subfigure} % <-- added
%\hfill% <-- added
\begin{subfigure}{0.3\textwidth}
  \includegraphics[width=\linewidth]{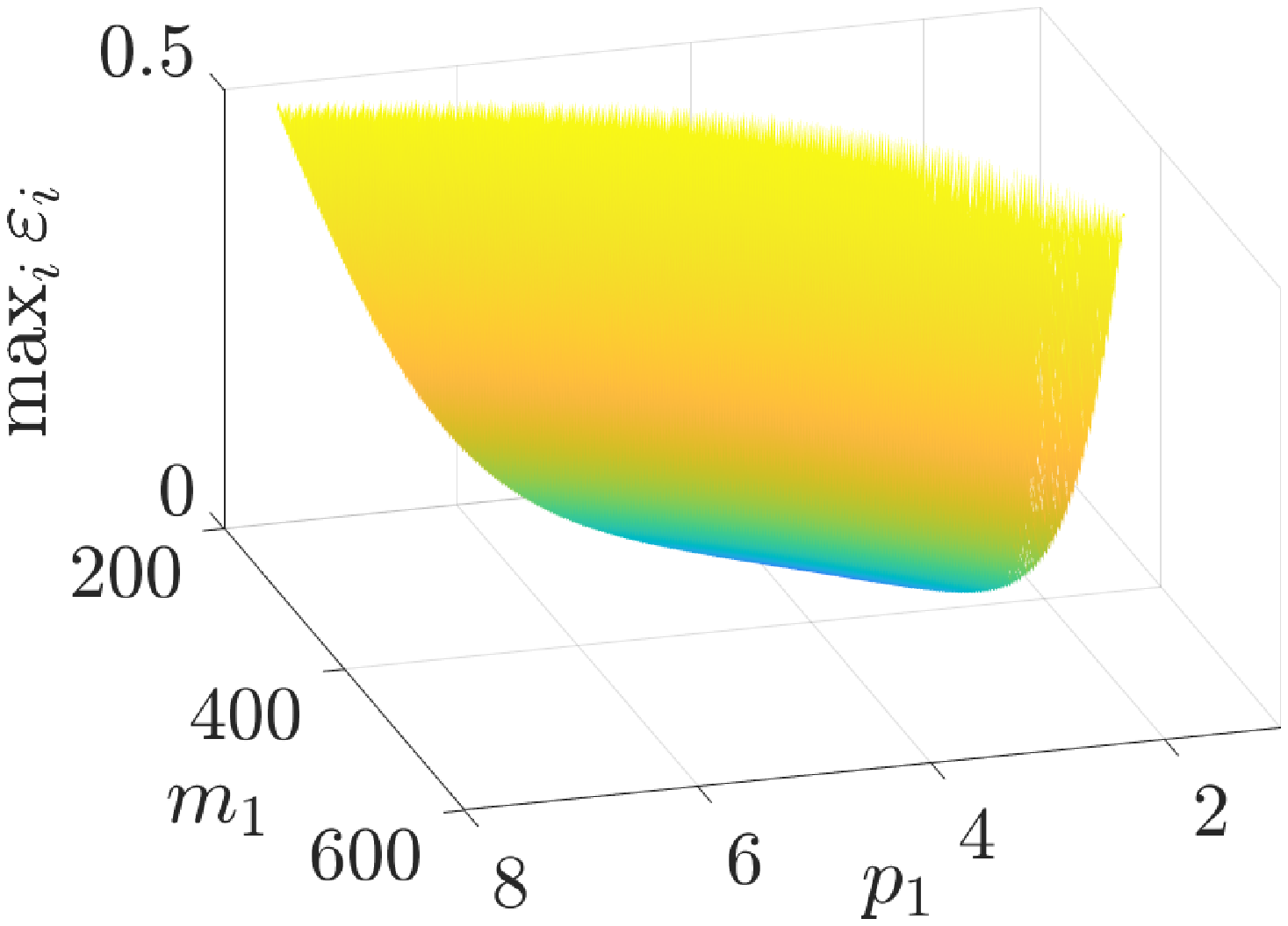}
  \caption{3D plot for error probability with variable $m_1$ and $p_1$.}
  \label{fig:2}
\end{subfigure}

\bigskip
\begin{subfigure}{0.3\textwidth}
  \includegraphics[width=\linewidth]{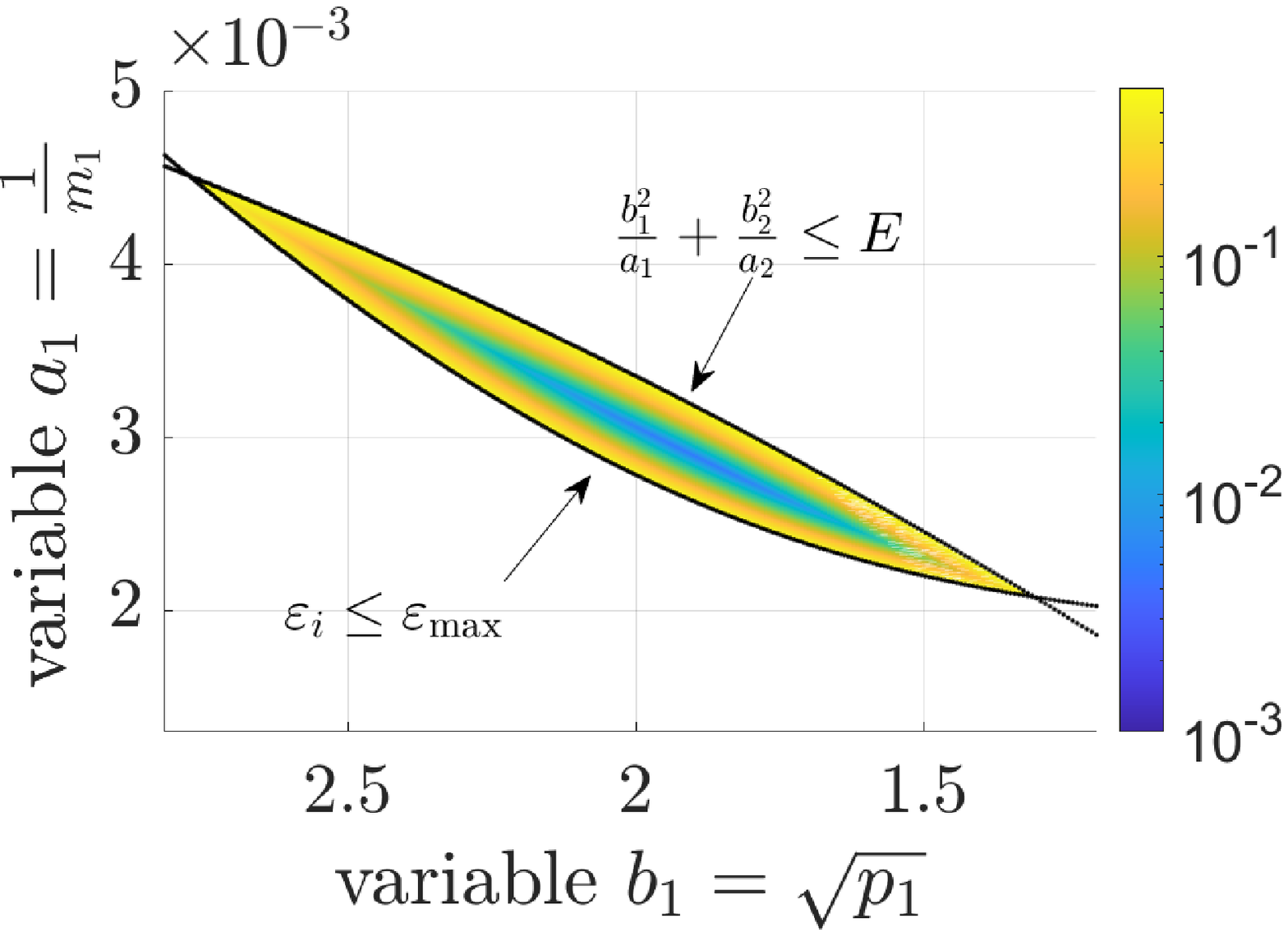}
  \caption{heat map of error probability with variable $b_1$ and $a_1$.}
  \label{fig:3}
\end{subfigure} % <-- added
\begin{subfigure}{0.3\textwidth}
  \includegraphics[width=\linewidth]{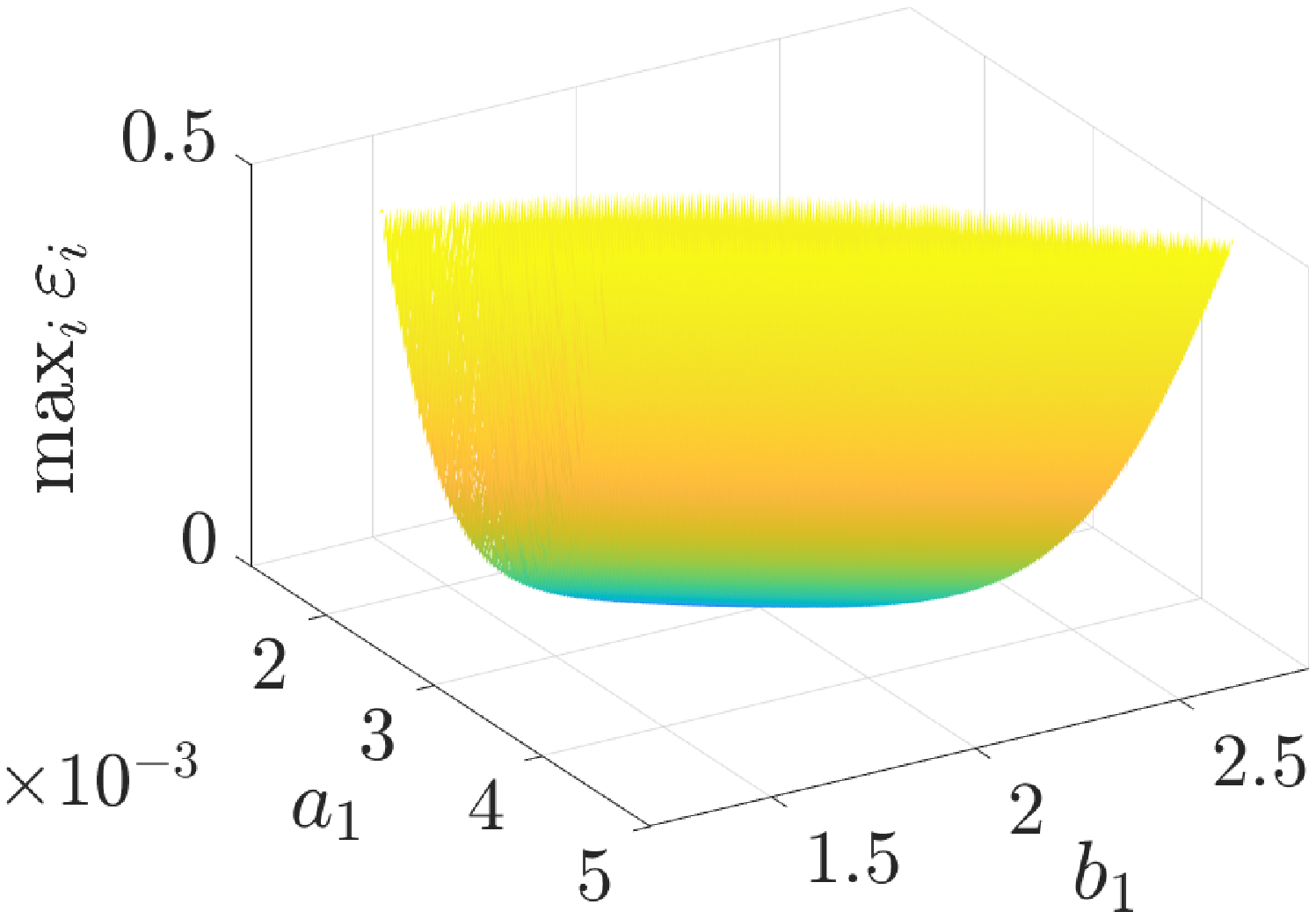}
  \caption{3D plot for error probability with variable $b_1$ and $a_1$.}
  \label{fig:4}
\end{subfigure}\hfill % <-- added
\caption{maximal error probability $\max_i\  \varepsilon_i$ with constraint~\eqref{con:energy_nonconvex} and~\eqref{con:energy_nonconvex_reformulated} against corresponding $m_1$ and $p_1$, as well as $b_1=\frac{1}{p_1}$ and $a_1=\sqrt{m_1}$.}
\label{fig:feasible_set}
\end{figure}

Furthermore, we show the advantage of our proposed joint optimization solution by comparing the results with integer search and alternating search. In particular, we plot the total error probability against total available blocklength $M$ with various total energy constraints in Fig.~\ref{fig:performance_comparison}. Overall, the error probability decreases exponentially when we increase the total available blocklength $M$ for all curves. However, we can observe significant performance gap between different approaches. On one hand, the proposed joint optimization design outperforms the alternating search method. It reveals the major disadvantage of alternating search. Specifically, the alternating search only offers sub-optimal solutions while our design is able to guarantee the global optimum. Moreover, longer total available blocklength $M$ or smaller energy budget, i.e., lower power per blocklength, also enlarges the difference. This implies that alternating search may provide an acceptable solution if the interplay of those two variables has only a weak influence on the system and can be approximately decoupled, e.g., the channels of UEs are homogeneous. On the other hand, we observe that the performance of integer search (almost) matches the performance of joint optimization. In fact, both approaches can achieve the globally optimal solution. 
{However, it should be emphasized that, although the integer search can provide globally optimal solution theoretically, it becomes computationally infeasible in practice for large IoT network scenarios, where massive number of sensor devices is involved due to the low computational efficiency.}  %the computational efficiency of those approaches are quite different. %in terms of computational efficiency, they are quite different.

\begin{figure}[!t]
    \centering
\includegraphics[width=0.57\textwidth,trim=0 10 0 0]{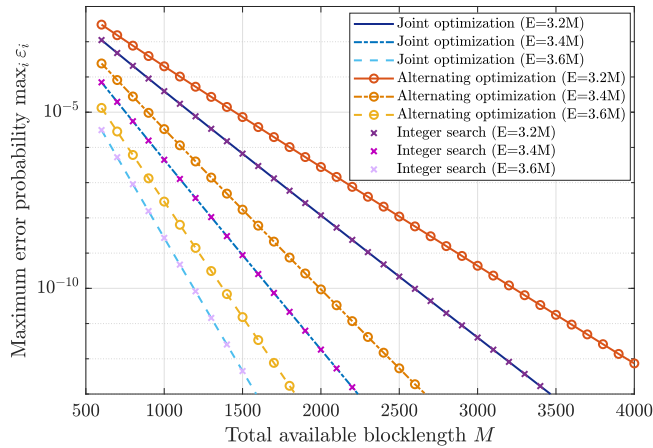}
\caption{Total error probability versus total available blocklength $M$. Performance comparison between our solution obtained with joint convexity feature, integer search and alternating search in various setups.}
\label{fig:performance_comparison}
\end{figure}

Another important parameter influencing the system performance is the transmitted data size~$\mathbf{D}$, which is investigated in Fig.~\ref{fig:D_1}. In particular, we plot the maximum error probability against average data size $\overline{D}$ with 5 users. The data size is heterogeneous and follows a fixed ratio, i.e., $\mathbf{D}=\{0.8\overline{D},0.9\overline{D},\overline{D},1.1\overline{D},1.2\overline{D}\}$ and $\sum^5_{i=1} D_i=5\overline{D}$. We also illustrate the performance difference between all three approaches in setups with varying total available blocklength $M=\{1000, 1250, 1500\}$. Clearly, increasing the average data size $\overline{D}$ significantly increases the error probability. Moreover, we can also observe that there exists clearly a trade-off between the data size limitation that the system is able to support and the total available resources (available blocklength in this case). For example, for a target error probability of $10^{-4}$, only $\overline{D}=355$ bits can be transmitted with $M=1000$. Meanwhile, we can transmit up to $\overline{D}=540$ bits with $M=1500$ at the same error level. From Fig~\ref{fig:D_1}, we also obtain similar observations as in Fig~\ref{fig:performance_comparison}, i.e., joint optimization and integer search are able to provide  better results than alternating search. Interestingly, the influence of $\overline{D}$ on the performance gain between the global optimum (obtained via joint optimization and integer search) and the local optimum (obtained via alternating search)  is insignificant. Therefore, the applicability of those approaches remains the same regardless of~ $\overline{D}$. 
\begin{figure}[!t]
    \centering
\includegraphics[width=0.57\textwidth,trim=0 10 0 0]{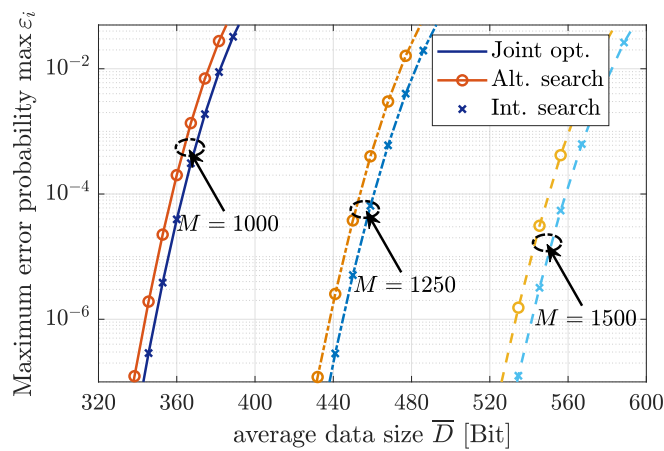}
\caption{Maximum error probability versus average data size $\overline{D}$ with various total available blocklength $M$. Data size for each user is set as $\mathbf{D}=\{0.8\overline{D},0.9\overline{D},\overline{D},1.1\overline{D},1.2\overline{D}\}$.}
\label{fig:D_1}
\end{figure}

Finally, we evaluate the impact of the channel gain on the performance. Specifically, simulations are carried out considering a two-hop relaying system, where the overall error probability $\varepsilon_{\rm O}$ is determined by the combination of error probability in each hop $\varepsilon_1(m_1,p_1)$ and $\varepsilon_2(m_2,p_2)$. Fig~\ref{fig:D_2} illustrates the optimal overall error probability $\varepsilon^*_{\rm O}$ and optimal resource allocation ratio \{$\frac{m_1}{m_2}$,  $\frac{p_1}{p_2}$\} versus the channel gain of the second hop $h_2$, i.e., the gain between the relay and destination. Meanwhile, we set the channel gain between the source and relay as constant, i.e., $h_1=1$. The optimal solutions are obtained via our proposed joint optimization approach, i.e., the solutions are globally optimal. As expected, increasing the channel gain results in the decrease of $\varepsilon^*_{\rm O}$. Since $\varepsilon_{\rm O}$  is a combination of $\varepsilon_1$ and $\varepsilon_2$, the optimal resource allocation scheme is to distribute the blocklength and transmit power so that the performance is balanced in both hops. For instance, when $h_1=h_2=1$, we have $m_1=m_2$ and $p_1=p_2$. This provides us insight on resource allocation in a homogeneous system. Specifically, when the users are considered as homogeneous (or if no information is available), an effective approach is to uniformly allocate the radio resources to each user.

\begin{figure}[!t]
    \centering
\includegraphics[width=0.57\textwidth,trim=0 10 0 0]{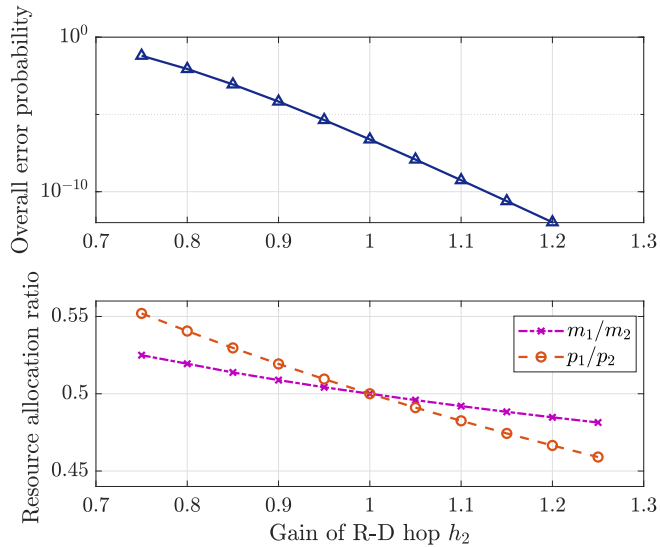}
\caption{Optimal overall error probability $\varepsilon^*_{\rm O}$ and optimal resource allocation ratio \{$\frac{m_1}{m_2}$,  $\frac{p_1}{p_2}$\} against channel gain of the second hop $h_2$ under the two hop relaying system.}
\label{fig:D_2}
\end{figure}
% \begin{figure}[!t]
%     \centering
% \includegraphics[width=0.5\textwidth,trim=0 10 0 0]{convex_convert.eps}
% \caption{Feasible set of Problem~\eqref{problem_downlink} (upper sub-figure) and Problem~\eqref{problem_downlink_reformulated} (lower sub-figure). the color represents the total error probability. the region with condition $r\leq C$, i.e., the regime with $\varepsilon\leq 0.5$, is plotted with low Transparency. {change $m_1$}}
% \label{fig:err_comparison}
% \end{figure}
\vspace*{-10pt}
\section{Conclusion}
In this work, we have investigated the fundamental characteristics of the FBL error probability to enable joint optimization designs for reliable transmissions. We have proved the joint convexity of the error probability with respect to transmit power and blocklength within a certain region of practical interest. We have also studied a general use case with a non-convex joint optimization problem in a multi-user scenario. By exploiting the variable substitution method, we have reformulated the problem as a convex problem and discussed the advantages of such an approach compared to other commonly applied methods. To further extend the applicability of this work, we have considered the joint convexity in two more practical scenarios, i.e., in fading channels and in cooperative relaying. In particular, we proved that the joint convexity still holds for these two scenarios. Via simulations, we have validated our analytical results and demonstrated the performance gain of our proposed approaches compared to benchmarks.

\appendices
\section{Proof of Lemma~\ref{lemma:joint_convex}}
%\begin{proof}
%\renewcommand{\qedsymbol}{}
To facilitate the analysis, we first let $w=\sqrt{\frac{m}{V(\gamma)}}\big(\mathcal{C}(\gamma)-\frac{D}{m}\big)\ln{2}$. As a result, the error probability can be rewritten as a composite function $\varepsilon=\varepsilon(w(\gamma,m))$. 
% As coding rate $r$ can be represent as packet size over blocklength $\frac{D}{m}$, the expression of error probability can be further written as:
% \begin{equation}
% \varepsilon=\mathcal{P} (\gamma,\frac{D}{m},m)=Q\Big(\sqrt{\frac{m}{V}}\big(C({\gamma})-\frac{D}{m}\big)\ln{2}\Big)
% \end{equation}
By taking the first and second derivatives of the error probability $\varepsilon$ with respect to $w$, we have
\vspace*{-5pt}
\begin{equation}
\label{eq:de_dw}
\frac{\partial{\varepsilon}}{\partial{w}}=\frac{\partial \left(\int^\infty_w \frac{1}{\sqrt{2\pi}}e^{-\frac{t^2}{2}}dt \right)}{\partial w}=  -\frac{1}{\sqrt{2\pi}}e^{-\frac{w^2}{2}}< 0,
\vspace*{-10pt}
\end{equation}
\vspace*{-5pt}
\begin{equation}
\frac{\partial^2{\varepsilon}}{\partial{w^2}}=\frac{w}{\sqrt{2\pi}}e^{-\frac{w^2}{2}}\geq 0.
\vspace*{-10pt}
\end{equation}
The above inequalities hold due to the fact that $w$ is always non-negative in the considered high-reliability scenario where $\mathcal C(\gamma) \geq r$. Clearly, the error probability $\varepsilon$ is convex and monotonically decreasing in $w$ within the considered regime, i.e., $w\geq 0$  ($\mathcal C (\gamma)\geq r$). Therefore, the joint convexity of $\varepsilon$ in $\gamma$ and $m$ can be proved if we can prove that the auxiliary variable $w$ is jointly concave with respect to $\gamma$ and $m$. 

Next, to prove the joint concavity of $w$ in $m$ and $\gamma$, we show that the Hessian matrix $\mathbf{H}$ of auxiliary variable $w$ is negative semi-definite, where %according to the chain rule 
the Hessian matrix  $\mathbf{H}$ can be expressed as  
\vspace*{-5pt}
\begin{equation}
\label{eq:hessian}
\mathbf{H}=\left(
\begin{array}{cc} \frac{\partial^2{w}}{\partial{m}^2}&\frac{\partial^2{w}}{\partial{m}\partial{\gamma}} \\ 
\frac{\partial^2{w}}{\partial{\gamma}\partial{m}}&\frac{\partial^2{w}}{\partial{\gamma}^2}\\ 
\end{array}
\right).
\vspace*{-10pt}
\end{equation}

Subsequently, we investigate each component in matrix $\mathbf{H}$. 
On the one hand, for the partial derivatives of $w$ with respect to  $m$, we have
\vspace*{-5pt}
\begin{align}
\frac{\partial{w}}{\partial{m}}&=\frac{1}{2}m^{-\frac{1}{2}}V^{-\frac{1}{2}}\mathcal{C}\ln{2}+\frac{1}{2}m^{-\frac{3}{2}}V^{-\frac{1}{2}}D\ln{2}\geq 0, \\
%\end{equation}
%\begin{equation}
\frac{\partial^2{w}}{\partial{m}^2}&=\!-\frac{1}{4}m^{-\frac{3}{2}}V^{-\frac{1}{2}}\mathcal{C}\ln{2}\!-\!\frac{3}{4}m^{-\frac{5}{2}}V^{-\frac{1}{2}}D\ln{2}\leq 0,\!\!
\vspace*{-10pt}
\end{align}
and therefore the term $w$ is concave in blocklength~$m$. 

On the other hand, the partial derivative of $w$ with respect to SNR $\gamma$ is given by 
\vspace*{-5pt}% , we have 
\begin{equation}
\begin{split}
\frac{\partial{w}}{\partial{\gamma}}=&\frac{m^{\frac{1}{2}}V^{-\frac{1}{2}}}{(\gamma^2+2\gamma)(\gamma+1)}
\underbrace{\left(\gamma^2+2\gamma-\ln(\gamma+1)\right)}_{\Delta_1}%\\ 
%&
+\frac{1}{2}m^{-\frac{1}{2}}V^{-\frac{3}{2}}D\ln{2}\frac{2}{(1+\gamma)^3},
\end{split}
\vspace*{-10pt}
\end{equation}
where $\Delta_1$ is a function of SNR $\gamma$. In addition, the derivative of $\Delta_1$ with respect to $\gamma$ is given by $\frac{\partial\Delta_1}{\partial\gamma}=2\gamma+2-\frac{1}{\gamma+1}$, which is monotonically increasing in $\gamma\geq 1$, so that we have $\frac{\partial\Delta_1}{\partial\gamma}\geq3.5>0$ when $\gamma\geq 1$. Hence, we can also obtain that $\Delta_1\geq \Delta_1|_{\gamma=1}=3-\ln2>0$, i.e., $w$ is increasing in $\gamma$.
% \begin{equation}
% \begin{split}
% \frac{\partial B_1}{\gamma}&=\frac{\frac{\gamma^2+2\gamma}{\gamma+1}-(2\gamma+2)\ln{(\gamma+1)}}{(\gamma^2+2\gamma)^2}\\
% &=\frac{(\gamma^2+2\gamma)-(2\gamma^2+4\gamma+2)\ln{(\gamma+1)}}{(\gamma^2+2\gamma)^2(\gamma+1)}\\
% &\overset{\gamma\geq 1}{\geq}\frac{(\gamma^2+2\gamma)-\frac{1}{2}(2\gamma^2+4\gamma+2)}{(\gamma^2+2\gamma)^2(\gamma+1)}\\
% &=-\frac{1}{(\gamma^2+2\gamma)^2(\gamma+1)}<0
% \end{split}
% \end{equation}
Furthermore, we can formulate the second-order derivative of $w$ with respect to $\gamma$ as % have
%\begin{equation}
\begin{align}
\frac{\partial^2{w}}{\partial{\gamma}^2}=&\frac{\sqrt{m}}{(\gamma(\gamma+2))^{\frac{5}{2}}}\left(-(\gamma+1)^3+\frac{1}{\gamma+1}\right. %\nonumber  \\
%&~~~
\left.+3\ln2(\gamma+1)\Big(\log_2(\gamma+1)-\frac{D}{m}\Big)\right).
\end{align}
%\end{equation}
To investigate the sign of $\frac{\partial^2{w}}{\partial{\gamma}^2}$, we define the function 
\begin{equation}
\Delta_2(\gamma)=-(\gamma+1)^3+\frac{1}{\gamma+1}+3\ln2(\gamma+1)\log_2(\gamma+1).
\end{equation}
The corresponding first-order derivative of $\Delta_2(\gamma)$ is given by 
\begin{align}
    \frac{\partial\Delta_2(\gamma)}{\partial\gamma}&=-3(\gamma+1)^2-\frac{1}{(\gamma+1)^2}+3+3\ln(\gamma+1) \nonumber %\\
    %&
    \leq -3(\gamma+1)^2-\frac{1}{(\gamma+1)^2}+3+3\gamma\nonumber \\
    &=-3(\gamma+1)\gamma-\frac{1}{(\gamma+1)^2}\leq 0.
\end{align}
Therefore, we can obtain that $\Delta_2(\gamma)$ is decreasing in $\gamma\geq 1$ and $\Delta_2(\gamma)\leq \Delta_2(1)=6\ln2+0.5-8<0$, $\forall \gamma\geq 1$. As a result, we have
\begin{equation}
    \frac{\partial^2{w}}{\partial{\gamma}^2}\!=\!\frac{\sqrt{m}}{(\gamma(\gamma\!+\!2))^{\frac{5}{2}}}\left(\!\Delta_2(\gamma)\!-\!3\ln2(\gamma\!+\!1)\frac{D}{m}\right)\!\leq \!0, 
\end{equation}
so that the term $w$ is also concave in SNR $\gamma$. 

So far, we have proved that both diagonal elements of the Hessian matrix $\mathbf H$ defined in  \eqref{eq:hessian} are non-positive. To prove the joint concavity of $w$ in $m$ and $\gamma$, the final step is to guarantee a non-negative determinant for matrix $\mathbf H$. We first derive the expression for the remaining elements in the matrix $\mathbf H$, i.e.,
%For the second-order partial derivative, we have
\begin{equation}
%\begin{split}
\!\frac{\partial^2{w}}{\partial{m}\partial{\gamma}}\!=\!\frac{\partial^2{w}}{\partial{\gamma}\partial{m}}\!=\!\frac{m^{-\frac{1}{2}}V^{-\frac{1}{2}}\ln{2}}{2(\gamma+1)}\!\!\left(\!-{\frac{\mathcal{C}(\gamma)+r}{\gamma^2+2\gamma}\!+\!\frac{1}{\ln2}}\!\right)\!\!.\!\!\!\!\!
%\geq&\frac{m^{-\frac{1}{2}}V^{-\frac{1}{2}}\ln{2}}{2(\gamma+1)}\frac{\partial{\gamma}}{\partial{p}}\big(-{\frac{2}{\gamma^2+2\gamma}\mathcal{C}+\frac{1}{\ln2}}\big)\\
%=&\frac{m^{-\frac{1}{2}}V^{-\frac{1}{2}}}{2(\gamma+1)}\frac{\partial{\gamma}}{\partial{p}}\underbrace{\big(1-2B_1\big)}_{>0}\geq 0.
%\end{split}
\end{equation}
%With the characterization of those components, we can start to analyze the Hessian matrix of error probability $\varepsilon$. $\mathbf{H}$ is positive semi-definite if and only if its leading principal minors, namely, its upper-left element $\frac{\partial^2 \varepsilon}{\partial m^2}$ and the determinant of the matrix $\det(\mathbf{H}$ itself are all non-negative, according to the  Sylvester's criterion(cite!).
%By investigating the upper-left element, we can show that
%Next, for the sake of clarity, 
Then, we formulate the determinant of $\mathbf H$, i.e.,
\begin{equation}
%\begin{split}
\det(\mathbf{H})=\frac{\partial^2{w}}{\partial{m}^2}\frac{\partial^2{w}}{\partial{\gamma}^2}
    -\left(\frac{\partial^2{w}}{\partial{m}\partial{\gamma}}
        \right)^2,
%\end{split}
\end{equation}
which can be expressed in more detail as in~\eqref{eq:y_4}. 
\begin{figure*}   
%\begin{equation}
\vspace*{-10pt}
\begin{align}
\det(\mathbf H)=&\frac{m^{-1}\ln{2}}{(\gamma^2\!+\!2\gamma)}\!\left(\!\underbrace{\frac{\mathcal{C}(\gamma)}{(\gamma^2\!+\!2\gamma)\ln2}\!-\!\frac{3\mathcal{C}(\gamma)^2}{4(\gamma^2\!+\!2\gamma)}\!+\!\frac{\mathcal{C}(\gamma)}{4\ln2}\!-\!\frac{1}{4(\ln2)^2}\!-\!\frac{3\mathcal{C}(\gamma)^2}{5(\gamma^2\!+\!2\gamma)^2}}_{\Delta_3}\right.\nonumber \!\\
\vspace*{-10pt}
&\qquad \qquad\left.+\!\underbrace{\frac{3r}{4\ln2}\!-\!\frac{3r\mathcal{C}(\gamma)}{2(\gamma^2\!+\!2\gamma)}-\frac{2r\mathcal{C}(\gamma)}{(\gamma^2\!+\!2\gamma)^2}}_{\Delta_4}\right.\nonumber\\
&\qquad \qquad \qquad \left. +\underbrace{\frac{2r}{(\gamma^2\!+\!2\gamma)\ln2}\!+\!\frac{9r^2}{4(\gamma^2\!+\!2\gamma)}\!+\!\frac{2r^2}{(\gamma^2\!+\!2\gamma)^2}\!-\!\frac{2\mathcal{C}(\gamma)^2}{5(\gamma^2\!+\!2\gamma)^2}}_{\Delta_5}\right).\label{eq:y_4} 
\end{align}
%\end{equation}
\hrule
\end{figure*}
More specifically, we have defined three terms in \eqref{eq:y_4} as $\Delta_3$, $\Delta_4$ and $\Delta_5$, respectively. Clearly, we have $\det(\mathbf H)$ non-negative if we prove that all the three terms, i.e., $\Delta_3$, $\Delta_4$ and $\Delta_5$, are non-negative. % with another set of intermediates $y_4$, $y_5$, $y_6$. After some manipulates, $y_4$ can be written as
In the following, we will successively investigate the signs of these three terms. 

Note that the term $\Delta_3=\Delta_3(\gamma)$ is a pure function of SNR $\gamma$, while $\mathcal C=\log_2(1+\gamma)$. In other words, there are no additional variables in the function $\Delta_3(\gamma)$, which motivates us to numerically determine the sign of function $\Delta_3(\gamma)$ when $\gamma\geq 1$. %directly study the monotonic properties of 
By studying $\frac{\partial \Delta_3}{\partial \gamma}$, we can obtain that within the region of $\gamma\in[1,\infty)$, %the equation $\frac{\partial \Delta_3}{\partial \gamma}=0$ has only one root $\gamma_1^*\approx 0.9040<1$, %only when 
 the function $\Delta_3(\gamma)$ is monotonic. %in $\gamma\geq 1$. %, %has only one extreme value, 
%i.e., $\Delta_3(\gamma^*)\approx 0.0644>0$. 
Besides, we also have $\Delta_3(1)\approx 0.0046>0$ and $\lim_{\gamma\to\infty}\Delta_3(\gamma)=+\infty$, which indicates that $\Delta_3(\gamma)\geq 0$, $\forall \gamma\geq 1$. 

As for the term $\Delta_4$, 
we can show that 
\vspace*{-5pt}
\begin{equation}
\begin{split}
\Delta_4&=\frac{3r}{4\ln2}-\frac{3r\mathcal{C}(\gamma)}{2(\gamma^2+2\gamma)}-\frac{2r\mathcal{C}(\gamma)}{(\gamma^2+2\gamma)^2}%\\
%&
=\frac{r}{\ln2}\left(\frac{3}{4}-\frac{3\ln(\gamma+1)}{2(\gamma^2+2\gamma)}-\frac{2\ln(\gamma+1)}{(\gamma^2+2\gamma)^2}\right)\\
&\overset{\gamma\geq 1}{\geq} \frac{r}{\ln2}\left(\frac{3}{4}-\frac{3\ln{2}}{6}-\frac{2\ln{2}}{9}\right)\geq 0.
\end{split} 
\vspace*{-10pt}
\end{equation}
The above inequalities hold due to the fact that both functions $\frac{3\ln(\gamma+1)}{2(\gamma^2+2\gamma)}$ and $\frac{2\ln(\gamma+1)}{(\gamma^2+2\gamma)^2}$ are monotonically decreasing with respect to $\gamma\geq 1$. 

Finally, for $\Delta_5$, we have
\vspace*{-5pt}
\begin{equation}\label{eq:con_condi}
    \begin{split}
    &\Delta_5>0\\
    \!\!\!\Leftrightarrow~ &r\geq\frac{2}{\ln2(9(\gamma+1)^2-1)}\Big(-2\gamma^2-4\gamma %\\
    %&~~~
    +\!\sqrt{4(\gamma^2\!+\!2\gamma)^2\!+\!\frac{2}{5}(9(\gamma\!+\!1)^2\!-\!1)\ln^2(\gamma\!+\!1)}\Big)\\
    &~\triangleq \Delta_6(\gamma).
    \end{split}
\vspace*{-10pt}
\end{equation}
Note that although the above condition seems complicated, we can still investigate the function $\Delta_6(\gamma)$ for an insight on the condition. Since $\Delta_6(\gamma)$ is a pure function of $\gamma$, we can also perform a numerical evaluation. Via letting $\frac{\partial \Delta_6}{\partial \gamma}=0$, we can obtain the unique root $\gamma_2^*\approx 1.2408$ within the regime of $\gamma\geq 1$. And accordingly, we have that $\Delta_6(\gamma)$ is  %non-increasing when $\gamma\in[1,\gamma_2^*]$ and 
non-decreasing when $\gamma\in[\gamma_2^*,+\infty)$, while it holds $\Delta_6(\gamma)\leq\Delta_6(\gamma_2^*)\approx 0.0448$. In other words, a minimum coding rate constraint of $r\geq 0.0448$ [bits/blocklength], will be sufficient for this condition, which is also true in most practical scenarios. As a result, we have that under the condition \eqref{eq:con_condi}, the joint concavity of $w$ with respect to $m$ and $\gamma$ holds. Namely, in the considered high-reliability scenario, the error probability $\varepsilon$ is jointly convex in blocklength $m$ and SNR $\gamma$ under the condition \eqref{eq:con_condi}.

In addition, %Furthermore, 
since the condition \eqref{eq:con_condi} lacks an intuitive physical meaning, we further derive a simple, but tight bound for the joint convexity. By comparing all three positive terms in $\Delta_5$, we can observe the following:
\begin{itemize}
    \item When $r\leq \frac{8}{9\ln2}$, $\frac{2r}{(\gamma^2+2\gamma)\ln2}$ is larger than $\frac{9r^2}{4(\gamma^2+2\gamma)}$. If it holds that $\frac{2r}{(\gamma^2+2\gamma)\ln2}-\frac{2\mathcal{C}(\gamma)^2}{5(\gamma^2+2\gamma)^2}\geq 0$, then we will have $\Delta_5\geq 0$, i.e., the joint convexity of $\varepsilon$ in $m$ and $\gamma$ holds. More specifically, given $r\leq \frac{8}{9\ln2}$, we have the constraint for $\gamma$ guaranteeing the convexity as
\vspace*{-5pt}
\begin{equation}
\begin{split}
&\frac{2r}{(\gamma^2+2\gamma)\ln2}-\frac{2\mathcal{C}(\gamma)^2}{5(\gamma^2+2\gamma)^2}%\\
%=&
=\frac{1}{\big(\gamma^2+2\gamma)\ln2}(2r-\frac{2\ln^2(\gamma+1)}{5(\gamma^2+2\gamma)\ln2}\big)\geq0\\
\Leftarrow&2r\geq\frac{2(\gamma+1)}{5(\gamma^2+2\gamma)\ln2}%\\
\Leftarrow%&
\gamma\geq\frac{1}{5r\ln2}.
\end{split}
\vspace*{-10pt}
\end{equation}
\item When $r>\frac{8}{9\ln2}$, we have  $\frac{2r}{(\gamma^2+2\gamma)\ln2}<\frac{9r^2}{4(\gamma^2+2\gamma)}$. Therefore, $\Delta_5\geq 0$ holds when we have $\frac{9r^2}{4(\gamma^2+2\gamma)}-\frac{2\mathcal{C}(\gamma)^2}{5(\gamma^2+2\gamma)^2}\geq 0$, which results in the following constraint on $\gamma$
\vspace*{-5pt}:
\begin{equation}
\begin{split}
&\frac{9r^2}{4(\gamma^2+2\gamma)}-\frac{2\mathcal{C}(\gamma)^2}{5(\gamma^2+2\gamma)^2}%\\
%=&
=\frac{1}{\gamma^2+2\gamma}\Big(\frac{9}{4}r^2-\frac{2\ln^2(\gamma+1)}{5(\gamma^2+2\gamma)(\ln2)^2}\Big)\geq0\\
\Leftarrow&\frac{9}{4}r^2\geq\frac{2(\gamma+1)}{5(\gamma^2+2\gamma)\ln^22}%\\
\Leftarrow%&
\gamma\geq\frac{8}{45r^2\ln^22}.
\end{split}
\vspace*{-10pt}
\end{equation}
\end{itemize}
Combining both cases together, we have that $\Delta_5\geq0$, when \vspace*{-5pt}
\begin{equation}
    \gamma\geq \max\{ \frac{1}{5r\ln2},~ \frac{8}{45r^2\ln^22}\}.
    \vspace*{-10pt}
\end{equation}
Namely, compared to \eqref{eq:con_condi}, the above condition for $\gamma$ is  tighter but more intuitive for the joint convexity of error probability $\varepsilon$ with respect to $m$ and $\gamma$. 
%\end{proof}
\vspace*{-10pt}
\section{Proof of Lemma~\ref{lemma:joint_convex_reformulated}}
%\begin{proof}
%For the sake of clarity for the proof, we define the symbol $\triangleq$, where $x \triangleq y$ indicates $\text{sign}(x)=\text{sign}(y)$.
%Then, 
The Hessian matrix of $\varepsilon_i$ with respect to $\mathbf{a}$ and $\mathbf{b}$ can be written as\vspace*{-5pt}
\begin{align}
%\begin{split}
\mathbf{\widetilde{H}}\!&=\! \left( \! %左括号
\begin{array}{cc} %该矩阵一共2列，每一列都居中放置
\frac{\partial{\varepsilon_i}}{\partial{m_i}}\frac{\partial^2{m_i}}{\partial{a}^2_i}\!+\!\frac{\partial^2{\varepsilon_i}}{\partial{m}^2_i}\left(\frac{\partial{m_i}}{\partial{a_i}}\right)^2\!&\! \frac{\partial{\varepsilon_i}}{\partial{m_i}\partial{p_i}}\frac{\partial{m_i}}{\partial{a_i}}\frac{\partial{p_i}}{\partial{b_i}} \\ %第一行元素
\frac{\partial{\varepsilon_i}}{\partial{m_i}\partial{p_i}}\frac{\partial{m_i}}{\partial{a_i}}\frac{\partial{p_i}}{\partial{b_i}} &\frac{\partial{\varepsilon_i}}{\partial{p_i}}\frac{\partial^2{p_i}}{\partial{b}^2_i}\!+\!\frac{\partial^2{\varepsilon_i}}{\partial{p}^2_i}\!\left(\frac{\partial{p}_i}{\partial{b}_i}\right)^2\!\\ %第二行元素
\end{array}
\! \right)\nonumber %\\
%&
% =\left(123 %左括号
% \begin{array}{cc} %该矩阵一共2列，每一列都居中放置
% \frac{2}{a^3_i}\frac{\partial{\varepsilon_i}}{\partial{m_i}}+\frac{1}{a^4_i}\frac{\partial^2{\varepsilon_i}}{\partial{m}^2_i}& \frac{2b_i}{a^2_i}\frac{\partial{\varepsilon_i}}{\partial{m_i}\partial{p_i}} \\ %第一行元素
% \frac{2b_i}{a^2_i}\frac{\partial{\varepsilon_i}}{\partial{m_i}\partial{p_i}} &2\frac{\partial{\varepsilon_i}}{\partial{p_i}}+4b^2_i\frac{\partial^2{\varepsilon_i}}{\partial{p}^2_i}\\ %第二行元素
% \end{array}
% \right)
.\label{eq:hessian_ab}
%\end{split}
\vspace*{-10pt}
\end{align}
To show the convexity of $\varepsilon_i$, we first investigate the sign of upper-left element
of $\mathbf{\widetilde{H}}$:\vspace*{-5pt}
\begin{equation}
\label{eq:d^2e_db^2}
\begin{split}
    \frac{\partial^2 \varepsilon_i}{\partial \mathbf{a}^2}
    &=\frac{\partial{\varepsilon_i}}{\partial{m_i}}\frac{\partial^2{m_i}}{\partial a_i^2}\!+\!\frac{\partial^2{\varepsilon_i}}{\partial{m}^2_i}\left(\frac{\partial{m_i}}{\partial{a_i}}\right)^2%\\
    %&
    =\frac{1}{a^4_i}\frac{\partial \varepsilon^2_i}{\partial m^2_i}+\frac{2}{a^3_i}\frac{\partial \varepsilon_i}{\partial m_i}\\
    &=A\left(
    \frac{1}{a_i}w_i\left(\frac{\partial w_i}{\partial m_i}\right)^2-\frac{1}{a_i}\frac{\partial^2 w_i}{\partial m^2_i}-2\frac{\partial w_i}{\partial m_i}
    \right)%\\
    %&
    =A\left(\underbrace{\frac{\partial w_i}{\partial m_i}}_{\geq 0}\left(
    \frac{w_i}{a_i}\frac{\partial w_i}{\partial m_i}- 2\right)-\underbrace{\frac{\partial^2 w_i}{\partial m^2_i}}_{\leq 0}
    \right),
\end{split}
\vspace*{-10pt}
\end{equation}
where $A=\frac{1}{a^3_i\sqrt{2\pi}}e^{-\frac{w^2}{2}}$. Therefore, the derivate is non-negative if $\frac{w_i}{a_i}\frac{\partial w_i}{\partial m_i}- 2$ is non-negative. Note that for reliable transmission, $\varepsilon_i\ll 1$, i.e., $w_i\geq Q^{-1}(0.1)\geq 1.25$. Then, we have\vspace*{-5pt}
\begin{align}
%    \begin{split}
    %&
    \frac{w_i}{a_i}\frac{\partial w_i}{\partial m_i}- 2%\nonumber\\
    %=~&
    =&
    w_im_i(p_i+1)\sqrt{\frac{m_i}{p_i(p_i+2)}}\log(p_i+1)-2 %\nonumber \\
    %&\hspace*{8em}
    +w_i(p_i+1)\sqrt{\frac{m_i}{p_i(p_i+2)}}\frac{D_i}{m_i}\nonumber \\
    \overset{p_i\geq 1}{\geq}&1.25(1+1)\sqrt{1/3}\log(2)-2\geq 0.\vspace*{-10pt}
%    \end{split}
\end{align}
\begin{figure*}
\vspace*{-10pt}
\begin{equation}
\label{eq:det_tilde_H_expand}
    \begin{split}
     \det (\mathbf{\widetilde{H}})
    =&\frac{4b^2_i}{a^4_i}\frac{\partial^2\varepsilon_i}{\partial m^2_i}\frac{\partial^2\varepsilon_i}{\partial p^2_i}
     -\frac{2b_i}{a^2_i}\left(\frac{\partial^2\varepsilon_i}{\partial m_i \partial p_i} \right)^2
     +\frac{2}{a^4}\frac{\partial^2\varepsilon_i}{\partial m^2_i}\frac{\partial \varepsilon_i}{\partial p_i}  
     +\frac{8b^2_i}{a^3_i}\frac{\partial^2\varepsilon_i}{\partial p^2_i}
     \frac{\partial\varepsilon_i}{\partial m_i}
     +\frac{4}{a^3_i}\frac{\partial\varepsilon_i}{\partial m_i}\frac{\partial\varepsilon_i}{\partial p_i}\\
     =&\frac{b_i}{a^2_i}\left(\underbrace{2\frac{b_i}{a^2_i}\frac{\partial^2\varepsilon_i}{\partial m^2_i}\frac{\partial^2\varepsilon_i}{\partial p^2_i}
     -2\left(\frac{\partial\varepsilon_i}{\partial m_i \partial p_i}\right)^2}_{x_1}\right)+\frac{4}{a^3_i}\underbrace{\frac{\partial \varepsilon_i}{\partial m_i}\frac{\partial \varepsilon_i}{\partial p_i}}_{\geq 0}\\
     &+\frac{A^2}{a^3_i}\left(4b^2_iw_i \underbrace{\left(\frac{\partial w_i}{\partial m_i} \right)^2\frac{\partial w_i}{\partial p_i}}_{\geq 0}\left(\underbrace{\frac{2w_i}{a_i}\frac{\partial w_i}{\partial p_i}-2}_{x_2}\right) \right.
      \left.+8b^2_i\underbrace{\left(-\frac{\partial^2 w_i}{\partial m^2_i}\right)\frac{\partial w_i}{\partial p_i}}_{\geq 0}\left(\underbrace{\frac{2w_i}{a_i}\frac{\partial w_i}{\partial p_i}-1}_{\geq x_2} \right)\right.
      \\
      &\left. + \frac{w_i}{a_i}\underbrace{\left(\frac{\partial w_i}{\partial p_i} \right)^2\frac{\partial w_i}{\partial m_i}}_{\geq 0}
      \left(\underbrace{\frac{2b^2_iw_i}{a_i}\frac{\partial w_i}{\partial m_i}-1}_{x_3}\right)\right. \\
      &
      \left. +\frac{2}{a_i}\underbrace{\left(-\frac{\partial^2 w_i}{\partial p^2_i}\right)\frac{\partial w_i}{\partial m_i}}_{\geq 0}\left(\underbrace{\frac{3b^2_iw_i}{a_i}\frac{\partial w_i}{\partial m_i}-1}_{\geq x_3} \right)+\frac{12b^2_i}{a_i}\underbrace{\frac{\partial^2 w_i}{\partial p^2_i}\frac{\partial^2 w_i}{\partial m^2_i}}_{\geq 0}
      \right)\\
      \geq&\frac{b_i}{a^2_i} x_1+\frac{A^2}{a^3_i}
        \left( 
            \left(
                \left(\frac{\partial w_i}{\partial m_i} \right)^2\frac{\partial w_i}{\partial p_i}
                -\frac{\partial^2 w_i}{\partial m^2_i}\frac{\partial w_i}{\partial p_i} 
            \right)x_2 +
            \left(
                \left(\frac{\partial w_i}{\partial p_i} \right)^2\frac{\partial w_i}{\partial m_i}
                -\frac{\partial^2 w_i}{\partial p^2_i}\frac{\partial w_i}{\partial m_i}  
            \right)x_3
        \right) 
     \end{split}
\end{equation}
\hrule
\end{figure*}
Hence, the upper-left element of $\mathbf{\widetilde{H}}$, i.e., the second-order derivative of $\varepsilon_i$ with respect to $\mathbf{a}$ is non-negative. 

Next, %by combing the results from~\eqref{eq:de_dw} -~\eqref{eq:d^2e_dm^2}, 
we more specifically investigate the
determinant of $\mathbf{\widetilde{H}}$ in~\eqref{eq:det_tilde_H_expand}, in which three terms $x_1$, $x_2$ and $x_3$ are defined to facilitate further analysis. According to Lemma~\ref{lemma:joint_convex}, the error probability $\varepsilon_i$ is jointly convex in $m_i$ and $\gamma_i=p_i$. Based on the above convexity, respectively for each $x_j$, $j\in\{1,2,3\}$, we have\vspace*{-10pt}
\begin{equation}
    x_1
    \geq 2
        \left( 
           \frac{\partial^2\varepsilon_i}{\partial m^2_i}\frac{\partial^2\varepsilon_i}{\partial p^2_i}
            -\left(\frac{\partial^2\varepsilon_i}{\partial m_i \partial p_i}\right)^2
         \right)
     \overset{{\rm Lemma}~\ref{lemma:joint_convex}}{\geq} 0,\vspace*{-10pt}
\end{equation}\vspace*{-10pt}
\begin{align}
%\begin{split}
    x_2
    \geq& w_i\sqrt{\frac{m_i}{{1-\frac{1}{(1+p_i)^2}}}}
        \left(
            \frac{1}{p_i+1}-\frac{\log(p_i+1)}{(p^2_i+2p_i)(p_i+1)}
        \right)\!-\!1 %\nonumber \\
    %\geq& 
    \geq 1.25\frac{\sqrt{6}}{2}(\frac{1}{2}-\frac{\log 2}{6})-1\geq 0,
%\end{split}
\vspace*{-10pt}
\end{align}
and\vspace*{-10pt}
\begin{equation}
    x_3 \geq \frac{w_i}{a_i}\frac{\partial w_i}{\partial m_i}- 2\geq 0.\vspace*{-10pt}
\end{equation}
Therefore, we have $x_j\geq 0, \forall j=1,2,3$ and all other components in~\eqref{eq:det_tilde_H_expand} are non-negative, i.e., it holds that $\det \mathbf{\widetilde{H}} \geq 0$.

As a result, $\varepsilon_i$ is jointly convex in $\mathbf{b}$ and $\mathbf{a}$. %Furthermore, all constraints in~\eqref{problem_downlink_reformulated} are either affine or convex. Hence, Problem~\eqref{problem_downlink_reformulated} is joint convex in $\mathbf{b}$ and $\mathbf{a}$.

%\end{proof}

\bibliographystyle{IEEEtran.bst}
\bibliography{reference_convex_letter}
\end{document}